\newtheorem{prop}{Proposition}
\newtheorem{theorem}{\bf Theorem}
\newtheorem{lemma}{Lemma}
\newtheorem{remark}{Remark}
\newtheorem{corollary}{Corollary}
\newcommand{\be}{\begin{array}}
\newcommand{\ee}{\end{array}}
\newcommand{\ignore}[1]{}{}
\newcommand{\bbeta}{{\mbox{\boldmath $\bbeta$}}}
\def\bed{\begin{description}}\def\eed{\end{description}}
\def\ben{\begin{enumerate}}\def\een{\end{enumerate}}
\def\bea{\begin{eqnarray}}\def\eea{\end{eqnarray}}
\def\bean{\begin{eqnarray*}}\def\eean{\end{eqnarray*}}
\def\ba{\begin{array}}\def\ea{\end{array}}
\def\bt{\begin{theorem}  {\bf }}\def\et{\end{theorem}}
\def\bl{\begin{lemma}{\bf }}\def\el{\end{lemma}}
\def\br{\begin{remark}$\!\!\!\!\!$ {\bf .}}\def\er{\end{remark}}
\def\bc{\begin{corollary}$\!\!\!\!\!$ {\bf }}\def\ec{\end{corollary}}
\newenvironment{proof}{\paragraph{Proof:}}{\hfill$\square$}
\newcommand{\beqn}{\begin{eqnarray}}             
\newcommand{\eeqn}{\end{eqnarray}}               
\newcommand{\beq}{\begin{eqnarray*}}             
\newcommand{\eeq}{\end{eqnarray*}}
\DeclareMathOperator*{\argmax}{arg\,max}
\DeclareMathOperator*{\argmin}{arg\,min}
\begin{document}
	
\setlength{\abovedisplayskip}{4pt}
\setlength{\belowdisplayskip}{4pt}
\setlength{\abovedisplayshortskip}{2pt}
\setlength{\belowdisplayshortskip}{2pt}

\title{Predicting Future Change-points in Time Series}

\author{Chak Fung Choi$^1$, Chunxue Li$^1$, Chun Yip Yau\thanks{Corresponding author: cy.yau@cuhk.edu.hk}$^{*1}$, Zifeng Zhao$^2$}
\date{$^1$ Department of Statistics, Chinese University of Hong Kong \\
			$^2$ Department of Business Analytics, University of Notre Dame}

\maketitle

\begin{abstract}
	Change-point detection and estimation procedures have been widely developed in the literature. However, commonly used approaches in change-point analysis have mainly been focusing on detecting change-points within an entire time series (off-line methods), or quickest detection of change-points in sequentially observed data (on-line methods). Both classes of methods are concerned with change-points that have already occurred. The arguably more important question of when future change-points may occur, remains largely unexplored. In this paper, we develop a novel statistical model that describes the mechanism of change-point occurrence. Specifically, the model assumes a latent process in the form of a random walk driven by non-negative innovations, and an observed process which behaves differently when the latent process belongs to different regimes. By construction, an occurrence of a change-point is equivalent to hitting a regime threshold by the latent process. Therefore, by predicting when the latent process will hit the next regime threshold, future change-points can be forecasted. The probabilistic properties of the model such as stationarity and ergodicity are established. A composite likelihood-based approach is developed for parameter estimation and model selection. Moreover, we construct the predictor and prediction interval for future change points based on the estimated model. 
\end{abstract}

keywords: Composite likelihood, Forecast, Maximum a posteriori estimation, State space model, Structural break, Threshold model.

\section{Introduction}

Change-point analysis plays an important role in many scientific disciplines including climate science \citep{Climate2012, Environment2016}, engineering \citep{Leo1950}, economics, genetics \citep{Muller2000}, neuroimaging \citep{Kaplan2000}, and many other fields \citep{Matteson2014, Cho2015, Enikeeva2019}. 

The change-point problem can be classified into two major categories: on-line problems and off-line problems. The goal of the on-line change-point problem is to monitor sequentially observed data in order to quickly detect a change-point upon its occurrence, which is important in many applications such as quality control \citep{Mei2006}. Both Bayesian and non-Bayesian approach have been developed for on-line change-point problems, see \citet{Ryan2007}, \citet{Choi2012} and \citet{Fearnhead2007}. On the other hand, the off-line problem, also called retrospective change-point analysis, aims to detect abrupt changes for an entirely observed time series. Testing the existence of change-points, and locating the change-points accurately and computational efficiently are of main concerns. For some comprehensive overviews, see \citet{Csorgo-Horvath97}, 
\citet{Aue-Horvath13} and \citet{Jandhyala-et-al13}. 

There is also literature related to prediction in change-point analysis. For example, \citet{Pesaran2004} proposed a method to forecast future observations under structural breaks, where given the estimated change-points, future values of the time series are predicted. Based on \citet{Pesaran2004}, \citet{Tian2014} considered a new weighting scheme for forecasting. 
However, these works focus on predicting future values of the time series given existing structural changes. To the best of our knowledge, no method has been proposed to predict the locations of future change-points. This motivates us to develop a model to describe the mechanism of change-point occurrence, which allows one to predict future change-point locations.

Specifically, we develop a multiple-regime threshold autoregressive state space (TASS) model for predicting the locations of future change-points. The TASS model connects the observed time series to an unobserved latent process. In particular, the observed time series follows a multiple-regime threshold model, and its regime classification is based on the latent process and a set of thresholds. The latent process is a random walk cycling across the unit interval. We show that the TASS model possesses desirable probabilistic properties such as stationarity and ergodicity. By construction, the occurrence of a change-point is equivalent to hitting a regime threshold by the latent process. Therefore, by studying the hitting time of the latent process to the next regime threshold, a future change-point can be predicted and inferred.

Due to the presence of the latent process, the proposed TASS model requires new estimation, inference and model selection procedures. In particular, the full likelihood involves large number of integrals and is computationally infeasible. To tackle this problem, we develop composite likelihood-based procedures for accurate and computationally efficient model estimation and selection. Moreover, a maximum a posteriori sequence estimation algorithm is further proposed to conduct inference on the latent process and thus enables the prediction of future change-points. 

The rest of the paper is organized as follows. Section 2 specifies the TASS model and studies its probabilistic properties. Section 3 proposes the composite likelihood based model estimation, selection procedure and model diagnostic checks. Section 4 discusses the prediction of future change-points. Sections 5 and 6 present simulation studies and real data analysis, respectively. Technical proofs of all propositions, and theorems are gathered in the Appendix.

\section{Multiple-regime Threshold Autoregressive State Space Model}

\subsection{Model specification} \label{sec:model.def}
Denote the observed time series by $\{X_1,\ldots,X_{n}\}$. We say that $\{X_t\}_{t=1}^{n}$ follows an $m$-regime threshold autoregressive state space (hereafter, TASS($m$)) model if
\begin{eqnarray}
	X_t &=&
	\begin{cases}
		a_1+\phi_1(X_{t-1}-a_1)+\sigma_1 e_t\,, &Y_t \in [0,r_1)\,, \\
		a_2+\phi_2(X_{t-1}-a_2)+\sigma_2 e_t\,, &Y_t \in [r_1,r_2)\,, \\
		... \\
		a_m+\phi_m(X_{t-1}-a_m)+\sigma_m e_t\,, &Y_t \in [r_{m-1},1)\,,
	\end{cases}\label{model} \\
	Y_t &=&
	\begin{cases}
		Y_{t-1}+ \epsilon_t\,, &\mbox{if } Y_{t-1}+\epsilon_t<1\,,  \\
		Y_{t-1}+ \epsilon_t - \lfloor Y_{t-1}+ \epsilon_t \rfloor \,,  &\mbox{otherwise} \,,
	\end{cases}  \label{Y.RW} 
\end{eqnarray}

where $\{Y_{t}\}_{t=1}^{n}$ is the latent process, $\epsilon_t \stackrel{iid}{\sim}  \mbox{Gamma}(\alpha, \beta)$, $e_t \stackrel{iid}{\sim} \mbox{N}(0,1)$, the innovations $\{\epsilon_t\}$ and $\{e_t\}$ are independent, and $0=r_0< r_1 < \cdots <r_{m-1}<r_m= 1$ are thresholds that partition $\{Y_{t}\}$ into $m$ regimes. Here, $\lfloor Y_{t-1}+ \epsilon_t \rfloor$ denotes the integer part of $Y_{t-1}+ \epsilon_t$. For $Y_t \in [r_{j-1}, r_j)$, i.e., $Y_t$ is in the $j$th regime, the observation $X_t$ follows an AR(1) model with mean $a_j$ and autoregressive coefficient $\phi_j$. Extension to higher order auto regression models is possible but we focus on AR(1) to facilitate presentation. The parameters of the TASS model include the number of regimes $m$, threshold values $r_j$, $j=1,\ldots,m-1$, parameters of the AR models $(\phi_i, a_i, \sigma_i)$, $i=1,\ldots,m$, and the parameters $\alpha$ and $\beta$ of the latent process. For a fixed $m$, denote the parameter vector and its parameter space as $\theta_m=( \phi_1, a_1, \sigma_1,\ldots,\phi_m, a_m, \sigma_m, r_1,\ldots,r_{m-1}, \alpha, \beta )$ and $\Theta_m$, respectively. We assume that the parameter space $\Theta_m$ is compact.

The latent process $\{Y_t\}$ in \eqref{Y.RW} is a partial sum process that accumulates an independent and identically distributed gamma random variable $\epsilon_t$ at each time point $t$. Since a gamma random variable is always positive, $Y_t$ increases and travels along successive regimes $[r_j,r_{j+1})$ until $Y_{\tau-1}+\epsilon_{\tau}$ hits 1 at some time point $\tau$, and in this case $Y_{\tau}$ will be restarted by subtracting the integer part of the partial sum. Thereafter, $\{Y_t\}$ will repeat this process within $[0,1]$. We are primarily interested in the case $E(\epsilon_{t})=\frac{\alpha}{\beta} \ll 1$ such that the latent process $\{Y_{t}\}$ does not experience regime-switch frequently. By construction, $\{Y_t\}$ is a Markov chain. 

The key idea of modeling and predicting change-points using the TASS model (i.e. model \eqref{model}) is that the observed time series $X_{t}$ experiences a change-point when the latent process $Y_{t}$ hits a regime threshold $r_j$, $j=1,\ldots,m$. Therefore, by predicting when $Y_t$ hits the next regime threshold, the location of a future change-point can be predicted. As is shown later, the random walk nature of $Y_{t}$ in \eqref{Y.RW} facilitates the computation of the regime threshold hitting time. 

\subsection{Comparisons with existing time series models} \label{sec:comparison}
The TASS model may appear similar to the well known self-excited multiple-regime threshold autoregressive (TAR) model and the hidden Markov model (HMM).

The TAR model, proposed by \citet{Tong1978}, is defined as
\begin{equation*}
	X_t=\begin{cases}
		a_1+\phi_1(X_{t-1}-a_1)+\sigma_1e_t\,, &X_{t-d} \in [r_0,r_1)\,, \\
		a_2+\phi_2(X_{t-1}-a_2)+\sigma_2e_t\,, &X_{t-d} \in [r_1,r_2)\,, \\
		... \\
		a_m+\phi_m(X_{t-1}-a_m)+\sigma_me_t\,, &X_{t-d} \in [r_{m-1},r_{m})\,.
	\end{cases}
\end{equation*} 
The HMM model assumes that the distribution of the observation $X_t$ depends on the current state of a hidden (latent) Markov process $C_t$; see for example, \citet{ZucchiniMacDonald2009}. The proposed TASS model is structurally different from the TAR model and the HMM model.

First, in the TAR model, the autoregressive model of the observation $X_t$ is governed by the values of the past observation $X_{t-d}$, for some $d>0.$ As $X_t$ may take a wide range of values in the real line, regime switching tends to occur frequently. In contrast, the latent process $\{Y_{t}\}$ in the TASS model is allowed to increase slowly and can stay in a regime for a fairly long period, thus is more suitable to model change-point data with piecewise stationary structures. Moreover, the random walk nature of the latent process $\{Y_{t}\}$ allows a simpler prediction theory than predicting future regime switches in the TAR model, as the stationary distribution of $X_{t-d}$ is difficult to be characterized in the TAR model.  

Second, in many applications of HMM, the transition probability of the latent Markov process is time homogeneous, i.e., the probability of the latent process changing from one state to another is constant across time. Moreover, the observation $X_t$ in HMM are conditionally independent given the hidden state. In contrast, in the TASS model, due to the positive increment design of the latent process $Y_t$, the transition probability of $Y_t$ across regimes is not time homogeneous. In particular, the longer $Y_t$ stays in the same regime, the more likely it will breach the regime boundary, causing both the latent process $Y_t$ and the observation $X_t$ to switch to the next regime. Also, the observation $X_t$ follows different autoregressive models in different regimes, which is neither conditionally independent nor depending only on the latent process $Y_t$. This property demonstrates the flexibility of the TASS model and distinguishes it from HMM. Furthermore, in the TASS model, the location of the states $[r_j,r_{j+1})$s are explicitly modeled, in contrast to HMM where the states are only assumed hidden. 

\subsection{Probabilistic properties} \label{properties}

In this subsection, we study some probabilistic properties of the TASS model, which are useful for parameter estimation and future change-point prediction.

For the TASS model, the conditional distribution of $X_t$ given $X_{t-1}$ ,$\cdots, X_1$ and $Y_t,\cdots,Y_1$ is equivalent to the conditional distribution of $X_t$ given $X_{t-1}$ and $Y_t$. In addition, the conditional distribution of $Y_t$ given $Y_{t-1},...,Y_1$ is equal to the distribution of $Y_t$ given $Y_{t-1}$. To derive $p(y_t \vert y_{t-1})$, note from \eqref{Y.RW} that given $Y_{t-1}=y_{t-1}$, $Y_{t}$ takes value $y_{t}\in(y_{t-1},1)$ if $\epsilon_t=y_{t}-y_{t-1}+j$ for $j=0,1,\ldots$, or $y_{t}\in[0,y_{t-1}]$ if $\epsilon_t=y_{t}-y_{t-1}+j$ for $j=1,2\ldots$ Therefore, 

\begin{eqnarray}
	p(y_t \vert y_{t-1}) = g_{\alpha,\beta}(y_t-y_{t-1})\mathbbm 1_{\{y_{t}>y_{t-1}\}} + \sum_{j=1}^{\infty}g_{\alpha,\beta}(y_t-y_{t-1}+j)\,, \ \ \ \ \mbox{for } 0 \leq y_t <1\, ,\label{conditionpr2}
\end{eqnarray} 

where $g_{\alpha,\beta}(x)={\beta^\alpha}/{\Gamma(\alpha)}x^{\alpha-1}e^{-\beta x}$ is the density of $\mbox{Gamma}(\alpha,\beta)$ distribution and $\mathbbm{1}_{\{\cdot\}}$ is the indicator function.

Theorem \ref{stationary} gives the stationary distribution of $Y_t$. The result goes beyond the TASS model as it does not require $\epsilon_t$ to be Gamma distributed. 

\begin{theorem} \label{stationary}
	
	If the latent process $\{Y_t\}$ follows \eqref{Y.RW}, where $\{\epsilon_t\}$ are i.i.d. random variables with a continuous probability density support on $[0,\infty)$, then the stationary distribution of $Y_t$ is Uniform $(0,1)$. In other words, the invariant measure of the Markov chain is $\pi(y_t)=1$ with $y_t\in (0,1)$. 
	
\end{theorem}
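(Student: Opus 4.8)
The plan is to recognize that the recursion \eqref{Y.RW} is simply addition modulo one on the circle, and then to prove invariance and uniqueness of the uniform law by a short Fourier argument. Since $Y_{t-1}\in[0,1)$ and $\epsilon_t>0$, in both branches of \eqref{Y.RW} one has $Y_t=(Y_{t-1}+\epsilon_t)-\lfloor Y_{t-1}+\epsilon_t\rfloor$, i.e.\ $Y_t$ is the fractional part of $Y_{t-1}+\epsilon_t$; equivalently $Y_t=(Y_{t-1}+\epsilon_t)\bmod 1$. Thus $\{Y_t\}$ is a random walk on the torus $\mathbb{R}/\mathbb{Z}$ driven by the non-degenerate step law of $\epsilon_1$, and the assertion is the classical fact that Lebesgue (Haar) measure is the unique stationary law of such a walk.

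First I would check invariance directly. Let $U\sim\mathrm{Uniform}(0,1)$ be independent of $\epsilon_1$ and set $W=(U+\epsilon_1)\bmod 1$. Conditioning on $\epsilon_1=e$, the variable $U+e$ is uniform on the interval $(e,e+1)$, which has length exactly one; the map $x\mapsto x\bmod 1$ sends such an interval onto $[0,1)$ in a measure-preserving, essentially bijective way, so $(U+e)\bmod 1\sim\mathrm{Uniform}(0,1)$ for every $e\ge 0$. Averaging over the law of $\epsilon_1$ gives $W\sim\mathrm{Uniform}(0,1)$; equivalently, using the transition density \eqref{conditionpr2}, $\int_0^1 p(y_t\mid y_{t-1})\,dy_{t-1}=1$, so $\pi\equiv 1$ is an invariant density.

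Next I would prove uniqueness using the characters $\chi_k(y)=e^{2\pi\mathrm{i}ky}$, $k\in\mathbb{Z}$. If $\mu$ is any stationary probability measure, let $c_k=\int_0^1\chi_k\,d\mu$. Applying one transition step and using $e^{2\pi\mathrm{i}k(\cdot\bmod 1)}=e^{2\pi\mathrm{i}k(\cdot)}$ together with the independence of $\epsilon_1$ from $Y_{t-1}$ yields $c_k=c_k\,\psi(k)$, where $\psi(k)=E[e^{2\pi\mathrm{i}k\epsilon_1}]$. For $k\neq 0$ we have $|\psi(k)|<1$, because a continuous density on $[0,\infty)$ cannot be carried by a coset of the lattice $k^{-1}\mathbb{Z}$, which is the only way the modulus could equal one. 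Hence $c_k=0$ for all $k\neq 0$ while $c_0=1$, and by the uniqueness of a finite measure with prescribed Fourier coefficients on a compact group, $\mu$ is Lebesgue measure, i.e.\ $\mathrm{Uniform}(0,1)$. Together with the previous paragraph this proves the theorem. An alternative route to uniqueness, avoiding Fourier analysis, is to note that because $\epsilon_1$ has support $[0,\infty)$ the series in \eqref{conditionpr2} makes $p(y_t\mid y_{t-1})$ strictly positive on $(0,1)^2$, so the chain satisfies a Doeblin minorization and hence admits a unique stationary law.

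The main obstacle I anticipate is the strict inequality $|\psi(k)|<1$ for $k\neq0$: this is the one place where the hypothesis of a \emph{continuous} density, rather than a lattice-supported step, is genuinely used, and it rests on the standard characterization of the equality case $|\phi_{\epsilon_1}(t)|=1$ for a characteristic function. The remaining ingredients are routine: the invariance computation is a one-line change of variables, and passing from vanishing Fourier coefficients to the uniform law is textbook Fourier uniqueness on the torus.
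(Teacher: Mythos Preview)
Your proof is correct and in fact establishes more than the paper does. The paper's argument is a direct computational verification of the invariance equation $\pi(B)=\int_0^1 K(x,B)\,\pi(dx)$ for an arbitrary interval $B=(a,b)$: it writes out the transition kernel from \eqref{conditionpr2}, splits the resulting double integral into three pieces according to whether $x<a$, $a\le x\le b$, or the wrap-around term contributes, and after a page of Fubini and change-of-variable manipulations telescopes everything to $b-a$. The paper never addresses uniqueness at all. By contrast, you dispatch invariance in one line via the translation invariance of the uniform law on length-one intervals, and then supply a Fourier (character) argument, with the Doeblin minorization as an alternative, to show that no other stationary law exists. Your route is shorter, more conceptual, and yields a strictly stronger conclusion; the paper's route is completely elementary, needing nothing beyond calculus, and has the minor expository advantage of making the transition density \eqref{conditionpr2} fully explicit, which is reused later in the likelihood derivations.
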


Theorem \ref{ergodic} establishes the ergodicity and the $\beta$-mixing property for the TASS model via the theory of Markov chain as $\{(X_t,Y_t)\}$ forms a first-order Markov chain.

\begin{theorem} \label{ergodic}
	
	If the times series $\{X_t\}$ follows the TASS model with $|\phi_j| < 1, j =1,\ldots,m$, then $\{(X_t, Y_t)\}$ is a stationary and geometrically ergodic Markov chain, and is $\beta$-mixing with coefficient $\beta_t = O(\bar\gamma^t)$ for some 
	$\bar{\gamma} \in (0,1)$. 
	
\end{theorem}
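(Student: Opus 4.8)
The plan is to verify the classical Foster--Lyapunov drift criterion for geometric ergodicity applied to the bivariate Markov chain $Z_t = (X_t, Y_t)$ on the state space $\mathbb{R}\times[0,1)$, and then to invoke the standard equivalence, for stationary Markov chains, between $V$-geometric ergodicity and exponential $\beta$-mixing. First I would write the one-step transition kernel of $Z_t$ in product form, $p(x_t,y_t\mid x_{t-1},y_{t-1}) = p(y_t\mid y_{t-1})\,p(x_t\mid x_{t-1},y_t)$, where $p(y_t\mid y_{t-1})$ is the density in \eqref{conditionpr2} and $p(x_t\mid x_{t-1},y_t)$ is the $\mathrm{N}\big(a_j+\phi_j(x_{t-1}-a_j),\sigma_j^2\big)$ density whenever $y_t\in[r_{j-1},r_j)$. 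Since $g_{\alpha,\beta}(\cdot)>0$ on $(0,\infty)$, the series $\sum_{j\ge1}g_{\alpha,\beta}(y_t-y_{t-1}+j)$ is strictly positive for all $y_t,y_{t-1}\in[0,1)$, so $p(y_t\mid y_{t-1})>0$ everywhere; together with the everywhere-positive Gaussian factor, the joint transition density is strictly positive on $\mathbb{R}\times[0,1)$. This yields $\psi$-irreducibility with respect to Lebesgue measure on $\mathbb{R}\times[0,1)$ and strong aperiodicity; moreover the transition density, being piecewise continuous in $y_t$ with finitely many pieces and jointly positive, is bounded below on every compact set by a positive constant times a fixed probability density, which gives a one-step minorization and hence shows every compact set is small (petite).

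Next I would establish the geometric drift condition with the Lyapunov function $V(x,y)=1+x^2$, which is finite, bounded below by $1$, and has compact sublevel sets in $\mathbb{R}\times[0,1)$ (the $Y$-coordinate being automatically controlled by compactness). Conditioning first on $Y_t$ and then on the active regime,
\[
E\!\left[V(X_t,Y_t)\mid Z_{t-1}=(x,y)\right] = 1 + \sum_{j=1}^{m} P\big(Y_t\in[r_{j-1},r_j)\mid Y_{t-1}=y\big)\Big[\big(a_j+\phi_j(x-a_j)\big)^2 + \sigma_j^2\Big],
\]
and since the mixture weights sum to one this is at most $1+\max_{1\le j\le m}\big[(a_j+\phi_j(x-a_j))^2+\sigma_j^2\big]$. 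Using $(a+bu)^2\le(1+\delta)b^2u^2+(1+\delta^{-1})a^2$ with $\phi^*:=\max_j|\phi_j|<1$ and choosing $\delta>0$ small enough that $\lambda:=(1+\delta)(\phi^*)^2<1$, one obtains $E[V(Z_t)\mid Z_{t-1}=z]\le \lambda V(z)+b$ for a finite constant $b$, hence $E[V(Z_t)\mid Z_{t-1}=z]\le \lambda' V(z)+b'\,\mathbbm{1}_C(z)$ for some $\lambda'\in(\lambda,1)$, $b'<\infty$, and the compact small set $C=\{z:V(z)\le K\}$ with $K$ large.

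With irreducibility, aperiodicity, the small-set property, and the geometric drift in hand, the Foster--Lyapunov theorem for $V$-geometric ergodicity (Meyn and Tweedie) yields a unique invariant probability measure $\pi$ with $\pi(V)<\infty$ and geometric convergence $\|P^t(z,\cdot)-\pi\|_V\le R\,V(z)\,\rho^t$ for some $\rho\in(0,1)$; in particular, initialized from $\pi$, the process $\{Z_t\}$ is stationary and geometrically ergodic (and, as a consistency check, the $Y$-marginal of $\pi$ is the Uniform$(0,1)$ law of Theorem \ref{stationary}). Finally, a stationary Markov chain is $\beta$-mixing, and bounding the coefficient by $\beta_t\le \int\|P^t(z,\cdot)-\pi\|_{\mathrm{TV}}\,\pi(dz)\le R\,\pi(V)\,\rho^t$ gives $\beta_t=O(\bar\gamma^t)$ with $\bar\gamma=\rho\in(0,1)$, completing the proof.

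I expect the main obstacle to lie in the irreducibility/small-set step rather than in the drift estimate: because the conditional variance $\sigma_j^2$ jumps across the thresholds, the transition density is only piecewise continuous in $y_t$, so one must argue carefully (using positivity of the gamma density, uniform convergence of the series in \eqref{conditionpr2}, and the finite number of regimes) that it is still bounded below on compact sets so that the minorization and petiteness go through. The drift computation is routine once $\phi^*<1$ is exploited, and the passage from the drift condition to geometric ergodicity and then to exponential $\beta$-mixing is a black-box application of standard Markov chain theory.
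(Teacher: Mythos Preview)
Your proof is correct and proceeds by the classical Meyn--Tweedie route directly: you construct an explicit Lyapunov function $V(x,y)=1+x^2$, verify $\psi$-irreducibility, aperiodicity, and the small-set property from strict positivity of the transition density, establish the geometric drift from $\max_j|\phi_j|<1$, and then read off exponential $\beta$-mixing from the total-variation bound integrated against $\pi$. The paper takes a shorter but less self-contained path: it invokes Theorem~5.1 of Stelzer (2009), a ready-made geometric-ergodicity criterion for random-coefficient/threshold-type AR processes, by checking three abstract conditions (weak Feller property, $\mu$-irreducibility, and a moment bound $E(|\psi_t|^\eta\mid Y_1=\delta)\le c<1$ on the regime-dependent AR coefficient), and then appeals to Proposition~2 of Liebscher (2005) for the equivalence with $\beta$-mixing. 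Your argument is more transparent and relies only on textbook Markov-chain theory, at the cost of writing out the drift and minorization by hand; the paper's argument outsources that work to a specialized reference and is correspondingly terse. One minor point worth tightening: the sublevel sets $\{V\le K\}=[-\sqrt{K-1},\sqrt{K-1}]\times[0,1)$ are not literally compact because $[0,1)$ is not closed, so ``compact sets are small'' does not quite cover them; however, your minorization argument (e.g.\ using the $j=2$ term of \eqref{conditionpr2}, for which $y_t-y_{t-1}+2\in(1,3)$ stays bounded away from $0$, to get a uniform lower bound on $p(y_t\mid y_{t-1})$) applies uniformly over $[-M,M]\times[0,1)$ and shows these sets are small directly, so this is only a cosmetic gap.
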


The following lemma provides several conditional probabilities that will be useful for deriving the composite likelihood function in Section \ref{sec.est}. 

\begin{lemma} \label{prob.property}
	
	Suppose that $Y_t$ is in the $i$th regime and $Y_{t+1}$ is in the $j$th regime. Then,  
	the conditional probability density functions of $X_t$ given $Y_t$, and $X_{t+1}$ given 
	$X_t, Y_{t+1}$, are given respectively by 
	
	\begin{eqnarray} \label{x1}
		p(x_t \vert y_t)&=&\sqrt{\frac{1-\phi_i^2}{2\pi\sigma_i^2}}\exp\left(-\frac{(1-\phi_i^2)(x_t-a_i)^2}{2\sigma_i^2}\right)\,, \\
		p(x_{t+1} \vert x_t, y_{t+1})&=&\frac{1}{\sqrt{2\pi\sigma_j^2}}\exp\left(- \frac{(x_{t+1}-a_j-\phi_j(x_t-a_j))^2}{2\sigma_j^2} \right)\,.\label{x2}
	\end{eqnarray}
	
\end{lemma}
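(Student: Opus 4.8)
The plan is to exploit the fact that, once the regime indices $i$ and $j$ are fixed, the dynamics of $\{X_t\}$ reduce to a standard Gaussian AR(1) calculation. First I would establish \eqref{x1}. Given that $Y_t$ lies in the $i$th regime, the model \eqref{model} says that locally $X_t = a_i + \phi_i(X_{t-1}-a_i) + \sigma_i e_t$; combining this with Theorem \ref{ergodic}, which guarantees that $\{(X_t,Y_t)\}$ admits a stationary distribution, the conditional law of $X_t$ given $\{Y_t \text{ in regime } i\}$ should be the stationary marginal of this AR(1) recursion, namely $\mathrm{N}\!\left(a_i,\ \sigma_i^2/(1-\phi_i^2)\right)$, which is exactly the density in \eqref{x1}. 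The care needed here is to justify that conditioning on $Y_t$ (equivalently, on the current regime) really does yield the stationary AR(1) marginal — this is where the bulk of the argument lies.

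For that step I would argue as follows. By the Markov structure noted just before \eqref{conditionpr2}, $X_t$ depends on the past only through $X_{t-1}$ and $Y_t$, and $Y_t$ is itself Markov; so $p(x_t\mid y_t)$ is obtained by integrating $p(x_t\mid x_{t-1},y_t)$ against the conditional law of $X_{t-1}$ given $Y_t$. Under stationarity the marginal law of $X_{t-1}$ within regime $i$ is a fixed point of the regime-$i$ AR(1) transition, and the unique such fixed point with $|\phi_i|<1$ is the normal law with mean $a_i$ and variance $\sigma_i^2/(1-\phi_i^2)$; substituting and carrying out the Gaussian convolution returns the same law, confirming consistency. Alternatively, and more cleanly, one may simply cite that within a regime the process behaves as a stationary AR(1) and invoke the classical formula for its stationary marginal, with geometric ergodicity from Theorem \ref{ergodic} ensuring such a stationary regime-conditional distribution exists and is unique.

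Establishing \eqref{x2} is then immediate: conditional on $Y_{t+1}$ being in the $j$th regime and on $X_t = x_t$, equation \eqref{model} gives $X_{t+1} = a_j + \phi_j(x_t - a_j) + \sigma_j e_{t+1}$ with $e_{t+1}\sim \mathrm{N}(0,1)$ independent of $X_t$ and of the latent process; hence $X_{t+1}\mid (X_t,Y_{t+1})$ is $\mathrm{N}\!\left(a_j + \phi_j(x_t-a_j),\ \sigma_j^2\right)$, which is precisely \eqref{x2}. The independence of $\{e_t\}$ from $\{\epsilon_t\}$, and the fact that the event $\{Y_{t+1}\text{ in regime }j\}$ is $\sigma(Y_{t+1})$-measurable while $e_{t+1}$ enters only the $X$-equation, make this conditioning legitimate.

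The main obstacle is the first claim: one has to be careful that ``$X_t$ given $Y_t$ in regime $i$'' is genuinely the \emph{stationary} AR(1) marginal and not merely the one-step conditional $p(x_t\mid x_{t-1},y_t)$. This requires either an explicit stationarity/fixed-point argument for the regime-$i$ recursion or a direct appeal to the stationarity and geometric ergodicity supplied by Theorem \ref{ergodic}; everything else is routine Gaussian algebra.
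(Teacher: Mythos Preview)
Your proposal is correct and follows essentially the same approach as the paper: the paper simply asserts that, given $Y_t$ in regime $i$, $X_t\mid Y_t\sim N\!\bigl(a_i,\sigma_i^2/(1-\phi_i^2)\bigr)$ ``which is the stationary distribution for the AR(1) model,'' and obtains \eqref{x2} directly from the regime-$j$ recursion in \eqref{model}, exactly as you do. Your write-up is more self-critical than the paper's---you flag the need to justify that conditioning on $Y_t$ really yields the stationary AR(1) marginal, whereas the paper treats this as evident---but the underlying argument is the same.
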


\section{Model Estimation and Inference}\label{sec.est}

\subsection{Consecutive tuple log-likelihood}\label{sec.tuple.lik}

Given that $x_{1:n}=(x_{1},x_2,\ldots,x_n)$ follows the TASS($m$) model, the full likelihood function can be expressed as 

\begin{eqnarray}
	p(x_{1:n};\theta_m) =\int\cdots\int p(x_1 \vert y_1)\left[\prod_{i=2}^{n}p(x_i \vert x_{i-1},y_i)\right] p(y_1) \left[ \prod_{i=2}^{n}p(y_i \vert y_{i-1})\right] dy_1\ldots dy_n \,. \label{full.lik}
\end{eqnarray}

The latent process $\{Y_t\}$ introduces multiple integrals which are computationally infeasible. 
To tackle this problem, we propose the consecutive tuple log-likelihood (CTL) estimator, a special case of composite likelihood which is particularly useful for time series; see \citet{Davis-Yau11}. 
Specifically, the Consecutive $k$-tuple Log-likelihood (CTL$_k$) is defined as

\begin{eqnarray*}
	\mbox{CTL}_k(\theta_m)=\sum_{t=1}^{n-k} \log p(x_t,x_{t+1},\ldots, x_{t+k};\theta_m)\,.
\end{eqnarray*}   

While computing the joint probability density function of $x_{1:n}$ is infeasible, the CTL$_k$ uses the sum of the computationally simpler joint probability density functions for $k$-tuples of observations.

Following the suggestion of \citet{Davis-Yau11}, we can gain the most computation efficiency by choosing $k$ as the smallest integer such that the model is identifiable with $p(x_t,x_{t+1},\ldots, x_{t+k};\theta_m)$. Some straightforward but tedious algebra shows that the TASS model is not identifiable with CTL$_1$. Therefore, for the estimation of the TASS model, we propose to use the CTL$_2$ function 

\begin{eqnarray} \label{CTL2}
	\mbox{CTL}_2(\theta_m)=\sum_{t=1}^{n-2} \log p(x_t,x_{t+1}, x_{t+2};\theta_m)\,.
\end{eqnarray}

The CTL$_2$ estimator can be found by maximizing \eqref{CTL2}. 
The joint densities in \eqref{CTL2} involves a three-dimensional integral, 
a computationally feasible formula of which is obtained in the next subsection based on the specific structure of the TASS model. 

\subsection{Consecutive tuple log-likelihood estimator}\label{sec:cpl}

To highlight the main idea and for the ease of presentation, we derive the CTL$_2$ function for the following two-regime TASS(2) model:

\begin{align*}
	X_t & =
	\begin{cases}
		a_1+\phi_1(X_{t-1}-a_1)+\sigma_1 e_t\,, &Y_t \in [0,r_1)\,, \\
		a_2+\phi_2(X_{t-1}-a_2)+\sigma_2 e_t\,, &Y_t \in [r_1,1) \,,
	\end{cases} \\ 
	Y_t & = 
	\begin{cases}
		Y_{t-1}+ \epsilon_t\,, &\mbox{if } Y_{t-1} + \varepsilon_t<1,  \\
		Y_{t-1}+ \epsilon_t-\lfloor Y_{t-1}+ \epsilon_t \rfloor \,,  &\mbox{otherwise},
	\end{cases}
\end{align*}

where  $\epsilon_t \overset{i.i.d.}{\sim} \mbox{Gamma}(\alpha, \beta)$, $e_t \overset{i.i.d.}{\sim} \mbox{N}(0,1)$. To derive the CTL$_2$ function in \eqref{CTL2}, note that 

\begin{eqnarray} \label{joint}
	p(x_t, x_{t+1}, x_{t+2}; \theta_2) =\int p(x_t, y_t) \left[ \prod_{i=1}^{2}p(x_{t+i} \vert x_{t+i-1}, y_{t+i})p(y_{t+i} \vert y_{t+i-1}) \right]
	dy_{t+2} dy_{t+1} dy_{t} \,.
\end{eqnarray}

Combining \eqref{joint} with the conditional densities 
\eqref{conditionpr2}, \eqref{x1} and \eqref{x2} established in Section \ref{properties}, we have the following 
formula for computing the joint density $p(x_t, x_{t+1}, x_{t+2};\theta_2)$. 

\begin{prop}\label{prop:triple.p} 
	The joint density of $p(x_t, x_{t+1}, x_{t+2};\theta_2)$ can be expressed as
	
	\begin{eqnarray}\label{joint.closed}
		p(x_t, x_{t+1},x_{t+2};\theta_2)=\sum_{i,j,k=1,2} g_t(i,j,k)w_t(i,j,k)\,,
	\end{eqnarray}
	
	where with $\mu_{1t}=\phi_j(x_{t}-a_j)+a_j$, $\mu_{2t}=\phi_k\phi_j(x_t-a_j)+\phi_k(a_j-a_k)+a_k$, $r_0=0$ and $r_2=1$,  
	
	\begin{eqnarray*}
		g_t(i,j,k)&=&\Phi_{x_t}\left(a_i,\frac{\sigma_i^2}{1-\phi_i^2} \right)
		\Phi_{x_{t+1}}\left( \mu_{1t},\sigma_j^2\right) 
		\Phi_{x_{t+2}}\left( \mu_{2t},\phi_k^2 \sigma_j^2+\sigma_k^2 \right)\,,\\
		w_t(i,j,k)&=&\int_{r_{i-1}}^{r_i} \int_{r_{j-1}}^{r_j} \int_{r_{k-1}}^{r_k} p(y_{t+2} \vert y_{t+1})p(y_{t+1} \vert y_t)p(y_t)dy_{t+2} dy_{t+1} dy_{t}\,,
	\end{eqnarray*}  
	
	where $\Phi_{x}(\mu, \sigma^2)$ 
	is the density function of the normal distribution with mean $\mu$ and variance $\sigma^2$. The detailed formulas of $w_t(i,j,k)$ are provided in equations \eqref{w.1} to \eqref{w.8} in the Appendix. 
\end{prop}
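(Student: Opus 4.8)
The plan is to collapse the triple integral in \eqref{joint} to a finite sum by using that, in the two-regime TASS model, the autoregressive parameters driving $X_s$ depend on $Y_s$ only through the regime index of $Y_s$. First I would split the integration domain $[0,1)^3$ of $(y_t,y_{t+1},y_{t+2})$ into the $2^3$ cells $[r_{i-1},r_i)\times[r_{j-1},r_j)\times[r_{k-1},r_k)$, $i,j,k\in\{1,2\}$, with $r_0=0$ and $r_2=1$, so that \eqref{joint} becomes the sum over these eight cells of the corresponding restricted integrals.

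On the cell indexed by $(i,j,k)$ the observation densities no longer depend on $(y_t,y_{t+1},y_{t+2})$: invoking Lemma \ref{prob.property} together with stationarity (Theorem \ref{stationary}, which also gives $p(y_t)\equiv 1$), one has $p(x_t\mid y_t)=\Phi_{x_t}\!\left(a_i,\sigma_i^2/(1-\phi_i^2)\right)$, $p(x_{t+1}\mid x_t,y_{t+1})=\Phi_{x_{t+1}}\!\left(a_j+\phi_j(x_t-a_j),\sigma_j^2\right)$ and $p(x_{t+2}\mid x_{t+1},y_{t+2})=\Phi_{x_{t+2}}\!\left(a_k+\phi_k(x_{t+1}-a_k),\sigma_k^2\right)$ throughout the cell. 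These three factors therefore pull out of the cell integral, leaving exactly $\int_{r_{i-1}}^{r_i}\!\int_{r_{j-1}}^{r_j}\!\int_{r_{k-1}}^{r_k} p(y_{t+2}\mid y_{t+1})\,p(y_{t+1}\mid y_t)\,p(y_t)\,dy_{t+2}\,dy_{t+1}\,dy_t=w_t(i,j,k)$. Summing the eight cells gives \eqref{joint.closed}, once the pulled-out product of Gaussian factors is identified with $g_t(i,j,k)$. That last identification is the only nontrivial algebraic point: substituting the recursions $X_{t+1}=\mu_{1t}+\sigma_j e_{t+1}$ and $X_{t+2}=a_k+\phi_k(X_{t+1}-a_k)+\sigma_k e_{t+2}$ exhibits $(X_{t+1},X_{t+2})$, given $x_t$ and the regimes $(j,k)$, as a jointly Gaussian pair with mean $(\mu_{1t},\mu_{2t})$, and completing the square (equivalently, changing variables from $(x_{t+1},x_{t+2})$ to the innovations $(e_{t+1},e_{t+2})$) rewrites the product in the form stated for $g_t(i,j,k)$.

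The real work is the explicit evaluation of $w_t(i,j,k)$, which is why the closed forms are deferred to the Appendix. There one substitutes the transition density \eqref{conditionpr2}, which carries an infinite series of shifted Gamma densities coming from the wrap-around rule in \eqref{Y.RW}, and uses $p(y_t)\equiv 1$ from Theorem \ref{stationary}. The triple integral then reduces to integrating products of Gamma densities over the regime intervals; I would collapse the inner integrals using the additivity of Gamma laws with common rate (so that $\epsilon_{t+1}+\epsilon_{t+2}\sim\mbox{Gamma}(2\alpha,\beta)$ wrapped around $[0,1)$), express the remaining interval integrals through the (regularized) incomplete Gamma function, and verify that the wrap-around series converges and may be summed term by term. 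Doing this bookkeeping for each of the eight index triples $(i,j,k)$ produces the eight formulas quoted in the Appendix and finishes the proof. The main obstacle is thus purely computational: keeping the Gamma-integral algebra and the series manipulations under control across all eight cells.
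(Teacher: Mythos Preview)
Your overall plan---partitioning $[0,1)^3$ into the eight regime cells, pulling the $x$-dependent factors out on each cell, and invoking Theorem~\ref{stationary} for $p(y_t)\equiv 1$---is exactly the paper's approach. Two points, however, do not go through as you describe.

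\textbf{The identification of $g_t(i,j,k)$.} What factors out of the $(i,j,k)$ cell is
\[
\Phi_{x_t}\!\Bigl(a_i,\tfrac{\sigma_i^2}{1-\phi_i^2}\Bigr)\,
\Phi_{x_{t+1}}\!\bigl(\mu_{1t},\sigma_j^2\bigr)\,
\Phi_{x_{t+2}}\!\bigl(a_k+\phi_k(x_{t+1}-a_k),\,\sigma_k^2\bigr),
\]
whose third factor depends on $x_{t+1}$. The proposition instead states the third factor as $\Phi_{x_{t+2}}(\mu_{2t},\phi_k^2\sigma_j^2+\sigma_k^2)$, i.e.\ the marginal of $X_{t+2}$ given $X_t$ and regimes $(j,k)$. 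These two triple products are \emph{not} pointwise equal: the former has a cross term in $(x_{t+1},x_{t+2})$ in the exponent, the latter does not, and $X_{t+1},X_{t+2}$ are not conditionally independent given $X_t$. So your completing-the-square / change-of-variables argument cannot convert one into the other. The paper's own proof simply asserts that ``the explicit form of $g_t(i,j,k)$ is stated in Proposition~\ref{prop:triple.p}'' and moves on, so you are not missing a hidden trick; the discrepancy is in the stated formula, and the honest resolution is to take the third factor in its $x_{t+1}$-dependent form, as dictated by \eqref{x2}.

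\textbf{The computation of $w_t(i,j,k)$.} Your plan to exploit $\epsilon_{t+1}+\epsilon_{t+2}\sim\mbox{Gamma}(2\alpha,\beta)$ does not collapse the integral: the defining event constrains the \emph{intermediate} value $Y_{t+1}\in[r_{j-1},r_j)$, so the two increments cannot be merged into their sum. The paper proceeds differently: it substitutes \eqref{conditionpr2}, integrates out $y_t$ and $y_{t+2}$ (each yielding a difference of Gamma CDFs, possibly summed over the wrap-around index), and leaves every $w_t(i,j,k)$ as a \emph{one-dimensional} integral in $y_{t+1}$ with integrand a product of such CDF differences; see \eqref{w.1}--\eqref{w.8}. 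That is the bookkeeping actually required, and your proposal should be amended along these lines.
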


\begin{remark}\label{rm.CTL.m}
	When the number of regimes $m$ is larger than 2, the CTL$_2$ function can still be used for estimation. 
	In particular, with the same factorization method, CTL$_2$ can be expressed as
	
	\begin{align*}
		\mbox{CTL}_2(\theta_m) &= \sum_{t=1}^{n-2} \log p(x_t, x_{t+1}, x_{t+2};\theta_m) \\
		&=\sum_{t=1}^{n-2} \log\left( \sum_{i,j,k=1,\ldots, m} g_t(i,j,k) w_t(i,j,k)\right)\,,
	\end{align*}

where $g_t(i,j,k)$ is defined in \eqref{joint.closed} and $w_t(i,j,k)$ is defined analogous to \eqref{w.1}-\eqref{w.8} in the Appendix.
\end{remark} 

\begin{remark}
To avoid the label switching problem, we introduce the identification condition $a_1= \min_{i=1,\ldots,m} {a_i}$. Also, by inspecting equations \eqref{w.1} to \eqref{w.8} in the Appendix, it can be seen that 
$p(x_t, x_{t+1}, x_{t+2};\theta_m)$ is differentiable w.r.t.\ the threshold parameter $r_1$. Thus, unlike self-excited threshold models which encounter computational difficulties (see, e.g., \cite{Yau-et-al14}), the TASS model does not suffer from computational issues for optimizing the likelihood function $CTL_k$. 
\end{remark}

\subsection{Model selection}\label{sec:model.select}
In general, the estimation of the number of regimes $m$ can be regarded as a model selection problem, see \citet{Yau-et-al14} and \citet{ChanYauZhang2015} in the context of TAR models. For simplicity, we choose the Bayesian Information Criteria (BIC), which is widely used in the literature, to select the number of regimes. Additionally, we assume there exists an arbitrarily small constant $\epsilon>0$ such that $r_i-r_{i-1}>\epsilon$ for $i=1,\cdots,m$, which then implies an upper bound $M$ on the number of possible states of the TASS model. For $m=1,2,\cdots,M$, the BIC is defined as 

\begin{equation*}
	\mbox{BIC}(m) = (4m+2)\log n-\frac{2}{C_{2}}\mbox{CTL}_{2}(\hat{\theta}_m)\,,
\end{equation*}
where $\hat{\theta}_m=\argmax_{\theta_m}\mbox{CTL}_{2}(\theta_m)$, $4m+2$ is the total number of parameters, and 
$C_{2}=3(n-2)/n$ is the average number that an observation is used in the expression of $\mbox{CTL}_2$. The constant $C_2$ is used to make adjustment such that the order of $\mbox{CTL}_2$ is comparable to that of the full likelihood, see \citet{MaYau2016} for details. The estimated number of regimes $\hat{m}$ is then defined as the minimizer of $\mbox{BIC}(m)$ such that $\hat{m}=\argmin_{m} \mbox{BIC}(m).$ We have the following results on the consistency of parameter estimation and model selection. 

\begin{theorem} \label{thm.consistency}
	Suppose that the time series $\{X_t\}$ follows the TASS$(m_o)$ model with the true model parameters $\theta^{o}= ( \phi_1, a_1, \sigma_1, \ldots,\phi_{m_o}, a_{m_o}, \sigma_{m_o}, r_1,\ldots,r_{{m_o}-1},$ 
	$ \alpha_o, \beta_o )$. 
	Then, we have 
	\begin{align*}
		\hat{m} \overset{p}{\to} m_{o}\,, \ 
		\left|\hat{\theta}_{\hat{m}}-{\theta}^{o} \right| \stackrel{p}{\rightarrow} 0 \,,
		\text{ and } \ \sqrt{n}(\hat{\theta}_{\hat{m}}-{\theta}^{o}) \stackrel{d}{\rightarrow} N(0, \Sigma_2\Sigma_1^{-1}\Sigma_2)\,,
	\end{align*}
	where $\Sigma_1=E\left(\frac{\partial^2 \log p\left(x_1,x_2,x_3;\theta^{o}\right)}{\partial \theta_{m_o}\partial \theta_{m_o}^\top}\right)$ and $\Sigma_2=\mbox{Var}\left(\frac{\partial \log p\left(x_1,x_2,x_3;\theta^{o}\right)}{\partial \theta_{m_o}}\right)$.
\end{theorem}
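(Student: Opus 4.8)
The plan is to treat $\hat\theta_m$ as a composite-likelihood M-estimator and run the classical program: a uniform law of large numbers for $n^{-1}\mathrm{CTL}_2$ together with analysis of its population limit, then consistency of $\hat\theta_{m_o}$ inside the correctly specified model, then consistency of BIC selection, and finally a Taylor-expansion/CLT argument for $\sqrt n$-normality; throughout I would invoke Theorem~\ref{ergodic}, which (since $|\phi_j|<1$ on the compact $\Theta_{m_o}$) makes $\{(X_t,Y_t)\}$, and hence the triple process $\{(X_t,X_{t+1},X_{t+2})\}$, stationary and geometrically $\beta$-mixing, supplying the ergodic theorem and a CLT for additive functionals. \emph{Step 1 (uniform LLN and population objective).} Write $L_m(\theta)=E\log p(x_1,x_2,x_3;\theta)$; pointwise, $n^{-1}\mathrm{CTL}_2(\theta)\to L_m(\theta)$ by ergodicity. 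To upgrade to uniform convergence over the compact $\Theta_m$ I would use continuity of $\theta\mapsto p(x_1,x_2,x_3;\theta)$ (Proposition~\ref{prop:triple.p}; differentiable in $r_1$ by the Remark) together with integrability of $\sup_{\theta\in\Theta_m}|\log p(x_1,x_2,x_3;\theta)|$ --- the upper bound being trivial since the Gaussian factors are bounded and $\sigma_i$ is bounded away from $0$ on $\Theta_m$, and the lower bound following from the quadratic-in-$x$ form of the exponents and $E\|(X_1,X_2,X_3)\|^2<\infty$ --- whence a Wald/Jennrich argument gives $\sup_{\theta\in\Theta_m}|n^{-1}\mathrm{CTL}_2(\theta)-L_m(\theta)|\overset{p}{\to}0$. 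By Gibbs' inequality, $\max_{\Theta_m}L_m\le L_{m_o}(\theta^o)$ for every $m$, with equality only when some $\theta\in\Theta_m$ renders the triple density equal to $p(\cdot;\theta^o)$ a.e.; since the TASS model is identifiable from its triple density (the reason CTL$_2$ rather than CTL$_1$ is used), the maximizer over $\Theta_{m_o}$ is unique and equals $\theta^o$.

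\emph{Step 2 (consistency and model selection).} On $\{\hat m=m_o\}$, Step 1 and the standard argmax theorem give $\hat\theta_{m_o}\overset{p}{\to}\theta^o$, so it remains to prove $P(\hat m=m_o)\to1$; because the $\epsilon$-separation of thresholds forces $m\le M<\infty$, a union bound over the finitely many $m\neq m_o$ suffices. For $m<m_o$, identifiability makes the Gibbs gap strict, $\max_{\Theta_m}L_m\le L_{m_o}(\theta^o)-\delta$, hence $\mathrm{CTL}_2(\hat\theta_{m_o})-\mathrm{CTL}_2(\hat\theta_m)\ge\tfrac12\delta n$ with probability $\to1$, which dominates the $O(\log n)$ difference of BIC penalties, so $\mathrm{BIC}(m)>\mathrm{BIC}(m_o)$ w.h.p. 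For $m>m_o$, either the Gibbs gap is still strict (same argument) or it closes because TASS$(m_o)$ embeds into TASS$(m)$ (split a regime, keeping its AR parameters), so that $\max_{\Theta_m}L_m=L_{m_o}(\theta^o)$; in that case one must show $\mathrm{CTL}_2(\hat\theta_m)-\mathrm{CTL}_2(\hat\theta_{m_o})=o_p(\log n)$, after which the penalty increment $4(m-m_o)\log n\to\infty$ forces $\mathrm{BIC}(m)>\mathrm{BIC}(m_o)$ w.h.p. Combining, $P(\hat m=m_o)\to1$ and therefore $|\hat\theta_{\hat m}-\theta^o|\overset{p}{\to}0$.

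\emph{Step 3 (asymptotic normality).} On $\{\hat m=m_o\}$, $\hat\theta_{m_o}$ solves $\partial\mathrm{CTL}_2/\partial\theta=0$, so a mean-value expansion about $\theta^o$ gives $\sqrt n(\hat\theta_{m_o}-\theta^o)=-\bigl(n^{-1}\partial^2\mathrm{CTL}_2(\tilde\theta)/\partial\theta\partial\theta^\top\bigr)^{-1}n^{-1/2}\,\partial\mathrm{CTL}_2(\theta^o)/\partial\theta$ for an intermediate $\tilde\theta$; applying Step 1 to the second derivatives (with consistency) makes the Hessian factor converge to $\Sigma_1$. Each summand $\partial\log p(x_t,x_{t+1},x_{t+2};\theta^o)/\partial\theta$ is a genuine triple-likelihood score, hence mean zero once differentiation under the integral is justified, and the summands are stationary and geometrically $\beta$-mixing with finite $(2+\eta)$ moments, so a CLT for mixing sequences yields $n^{-1/2}\partial\mathrm{CTL}_2(\theta^o)/\partial\theta\overset{d}{\to}N(0,\Sigma_2)$; Slutsky's theorem then delivers the sandwich covariance stated in the theorem.

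\emph{Main obstacle.} The genuinely delicate point is the over-specified case $m>m_o$ in Step 2: the true parameter of the larger model lies on the boundary of $\Theta_m$ where the model is not identifiable, so bounding $\mathrm{CTL}_2(\hat\theta_m)-\mathrm{CTL}_2(\hat\theta_{m_o})=o_p(\log n)$ cannot use an ordinary quadratic expansion and instead needs uniform (chaining/entropy) control of the composite-likelihood empirical process under $\beta$-mixing, in the spirit of \citet{Yau-et-al14} and \citet{ChanYauZhang2015}. Among the ``routine'' ingredients, the envelope integrability in Step 1, the exchange of differentiation and integration that makes the composite score mean-zero, and the $(2+\eta)$-moment bound feeding the mixing CLT are the ones that still require care.
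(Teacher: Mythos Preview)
Your overall architecture---uniform LLN via geometric ergodicity, argmax consistency at $m=m_o$, BIC handling $m<m_o$ through a strict Gibbs gap, and the sandwich CLT via Taylor expansion---matches the paper closely, as does your correct identification of the over-specified case $m>m_o$ as the crux.

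Where you and the paper diverge is precisely in how that crux is resolved. You assert that one ``cannot use an ordinary quadratic expansion'' and propose an empirical-process/chaining route in the spirit of \citet{Yau-et-al14}. The paper does the opposite: it \emph{does} use a Taylor expansion, but organised so as to neutralise the non-identifiability. Writing $\theta_m=(\underline{\theta}_m,\underline{r}_m)$ with $\underline{r}_m=(r_{m_o},\ldots,r_{m-1})$ the redundant thresholds, the paper first passes to a subsequence along which $\underline{\hat r}_m\to\underline{r}_m^{*}$ (some limit in $[0,1]^{m-m_o}$), and then Taylor-expands $\partial L_n/\partial\theta_m$ around $(\underline{\theta}_m^{*},\underline{r}_m^{*})$. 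The key structural observation is that at any such limit point the score and Hessian blocks in the $\underline{r}_m$-directions vanish identically (splitting a regime while keeping the same AR parameters leaves the triple density unchanged), so the expansion collapses to the identified block $\underline{\theta}_m$. Combined with a law of the iterated logarithm for $\beta$-mixing partial sums \citep{Rio:1995} applied to the score, this yields $|\underline{\hat\theta}_m-\underline{\theta}_m^{*}|=O(\sqrt{\log\log n/n})$ on the subsequence, and a short ``every subsequence has a further subsequence'' lemma lifts this to the full sequence. A second Taylor expansion then gives $L_n(\hat\theta_m)-L_n(\hat\theta_{m_o})=O_p((\log\log n)^2/n)$, comfortably $o_p(\log n/n)$.

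So the paper's device is: (i) exploit the exact flatness of the composite likelihood in the spurious-threshold directions at $\theta_m^{*}$, and (ii) replace the CLT by the LIL to obtain an almost-sure rate that survives the subsequence extraction. Your chaining proposal could in principle deliver the same $o_p(\log n)$ bound, but would require bracketing-entropy control of $\{\log p(\cdot;\theta_m):\theta_m\in\Theta_m\}$ under $\beta$-mixing, which you have not supplied; by contrast the paper's route needs only the pointwise LIL and the vanishing-derivative algebra, at the cost of the somewhat ad hoc subsequence lemma. In short, your diagnosis of the obstacle is right, but your proposed remedy is heavier machinery than what the paper actually uses, and your claim that a quadratic expansion is unavailable is not accurate for this model.
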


\subsection{MAP sequence estimation and change-point detection}\label{sec:MAP}

The composite likelihood based procedure only provides a point estimate for the model parameter $\theta_{m}$, including the thresholds $r_1,\ldots,r_{m-1}$, but not for the location of the change-points, i.e., the time points when the latent process hits the thresholds. Nevertheless, if the latent process $\{Y_t\}$ can be estimated, then the locations of change-points can be readily obtained by comparing the estimated latent process and the estimated thresholds. In this section, we investigate the estimation of $\{Y_t\}$ by the maximum a posteriori (MAP) sequence estimation; see, for example, \citet{bootstrapfilter2013}. 

In MAP sequence estimation, we find the sequence of state values $Y_t=y_t, t=1,\ldots,n$, that corresponds to the point of highest probability density conditional on the observed data $X_t=x_t, t=1,\ldots,n$. In other words,
denote $\Omega=[0,1)$ as the sample space of a single $Y_t , t=1, \ldots, n$, and $\Omega^n$ as the sample space of $Y_{1:n}$, we estimate $\hat{y}_{1:n}$ from
\begin{eqnarray} \label{maximization}
	\hat{y}_{1:n} =\argmax_{y_{1:n} \in \Omega^n} \left\{p(y_{1:n} \vert x_{1:n})\right\}\,.         
\end{eqnarray}

The above optimization problem aims at searching the maximizer of $p(y_{1:n}\vert $ $x_{1:n})$ over a high dimensional continuous state space, which is extremely difficult for a non-Gaussian state space model (\citet{MAP2001}, \citet{bootstrapfilter2013}). On the other hand, if this high dimensional maximization problem is imposed over a discrete state space, it can be solved by some existing computation algorithms such as the Viterbi algorithm \citep{Viterbi1967,ZucchiniMacDonald2009}.

Therefore, we can obtain an approximate solution of \eqref{maximization} by finding a suitable discretization of $\Omega^n$ and then approximate \eqref{maximization} by a high dimensional maximization problem over the discrete state space. To find a suitable discretization of $\Omega^n$ for maximizing $p(y_{1:n} \vert x_{1:n})$, a natural way is to generate realizations of $y_{1:n}$ from the conditional density $p(y_{1:n} \vert x_{1:n})$. When these realizations, denoted as $y_{1:n}^{(i)}, i=1,\ldots,N$, are generated for a large enough number $N$, the set $\{y_{1:n}^{(i)}\in \Omega^{n}: i=1,\ldots,N\}$ can serve as a suitable discretization of $\Omega^n$.

Motivated by the problem of solving high dimensional continuous maximization problem through discretization, \citet{bootstrapfilter2013} proposed a numerical approach to generate discretization of $\Omega^n$ from $p(y_{1:n} \vert x_{1:n})$ by the particle filter algorithm. In particular, \citet{bootstrapfilter2013} applied one of the most well-known particle filter algorithms, the bootstrap filter \citep{Gordon1993}, to conduct the discretization of $\Omega^n$. 
For completeness, we summarize the details in Algorithm \ref{bootstrap.filter.algorithm}. 

\begin{algorithm}
	\caption{Bootstrap Filtering Algorithm for the discretization of $\Omega^n$ \citep{bootstrapfilter2013} }
	\label{bootstrap.filter.algorithm}
	\begin{tabbing}
		\enspace Initialization: draw $N$ i.i.d. samples $y_1^{(i)}, i=1,\ldots,N$ from initial distribution $p(y_1)$\\
		\enspace For $t=2$ to $t=n$ \\
		\qquad For $i=1$ to $i=N$ \\
		\qquad \qquad Draw $\bar{y}_t^{(i)}$ in $\Omega$ independently from $p(y_t \vert y_{1:t-1}^{(i)})=p(y_t\vert y_{t-1}^{(i)}) $ \\
		\qquad \qquad Set $\bar{y}_{1:t}^{(i)}=\{y_{1:t-1}^{(i)},\bar{y}_t^{(i)}\}$ \\
		\qquad \qquad Evaluate the importance weights $\tilde{w}_t^{(i)}=p(x_t \vert \bar{y}_{1:t}^{(i)},x_{1:t-1})=p(x_t \vert x_{t-1},\bar{y}_t^{(i)})$ \\
		\qquad For $i = 1$ to $i = N$ \\
		\qquad \qquad Normalize the importance weights to obtain $w_t^{(i)}={\tilde{w}_t^{(i)}}\big/{\sum_{k=1}^N\tilde{w}_t^{(k)}}$\\
		\qquad For $i=1$ to $i=N$ \\
		\qquad \qquad resampling: set $y_{1:t}^{(i)}=\bar{y}_{1:t}^{(k)}$ with probability $w_t^{(k)}$, $k \in \{1,\ldots,N\}$
	\end{tabbing}
	\vspace*{-10pt}    
\end{algorithm}

In Algorithm \ref{bootstrap.filter.algorithm}, we set the stationary distribution of $\{Y_t\}$ as the initial distribution $p(y_1)$. The transition probability  $p(y_t | y_{1:t-1}^{(i)})=p(y_t|y_{t-1}^{(i)})$  can be calculated based on \eqref{conditionpr2}.
The resulting random samples $\{y_{1:n}^{(i)}\}_{i=1,\ldots,N }$ are called particles. 

Next, with the discretized state space $\Omega^n_N:=\{y_{1:n}^{(i)},i=1,\ldots,N\}$, 
the MAP sequence estimation approximates the maximization in \eqref{maximization} by 
\begin{eqnarray} \label{max.appro}
	\hat{y}_{1:n}^N = \argmax_{y_{1:n} \in \Omega^n_N} p(y_{1:n} \vert x_{1:n})\,.
\end{eqnarray}
With finitely many elements in $\Omega^n_N$, the maximization problem in \eqref{max.appro} can be analytically solved by searching for the maximum value of $p(y_{1:n}^{(i)} \vert x_{1:n})$ over $i=1,\ldots,N$. To compute $p(y_{1:n}^{(i)} \vert x_{1:n})$, observe that
\begin{eqnarray} \label{prportion}
	p(y_{1:n}^{(i)} \vert x_{1:n})&=& \frac{p(y_{n}^{(i)} \vert y_{1:n-1}^{(i)})p(x_{n} \vert x_{1:n-1},y_{n}^{(i)})}{p(x_{n} \vert x_{1:n-1})}p(y_{1:n-1}^{(i)} \vert x_{1:n-1}) \nonumber \\
	& \propto & p(y_{n}^{(i)} \vert y_{n-1}^{(i)})p(x_{n} \vert x_{n-1},y_{n}^{(i)})p(y_{1:n-1}^{(i)} \vert x_{1:n-1})\,.
\end{eqnarray}

Hence, by defining $a_1^{(i)}=p(y_1^{(i)} \vert x_1)$ and $a_t^{(i)}=p(y_t^{(i)} \vert y_{t-1}^{(i)})p(x_t \vert x_{t-1},y_t^{(i)})a_{t-1}^{(i)}$ for $t=2,\ldots,n$, $p(y_{1:n}^{(i)} \vert x_{1:n})=Ca_n^{(i)}$ can be calculated recursively from $t=1,\ldots,n$, where $C$ is a constant. The knowledge of $C$ is not needed as $\hat{y}_{1:n}^N$ in \eqref{max.appro} equals to $y_{1:n}^{(N_0)}$, where $N_0= \argmax_{i \in \{1,\ldots,N\}} a_n^{(i)}$. The MAP sequence estimation algorithm is summarized in Algorithm \ref{MAP.algorithm}. Algorithm \ref{MAP.algorithm} incorporates Algorithm \ref{bootstrap.filter.algorithm}. Thus, in practice it suffices to conduct Algorithm \ref{MAP.algorithm} once.  

Here, to avoid underflow of computation, we manipulate $\log( p(y_{1:n}^{(i)} \vert x_{1:n}))$ instead of $p(y_{1:n}^{(i)} \vert x_{1:n})$.

\begin{algorithm}
	\caption{MAP Sequence Estimation Algorithm}
	\label{MAP.algorithm}
	\begin{tabbing}
		\enspace Initialization: draw $N$ i.i.d. samples $y_1^{(i)}$ from initial distribution $p(y_1)$, let $a_1^{(i)}=\log (p(y_1^{(i)}))$, \\
		\enspace where $i=1,\ldots,N$\\
		\enspace For $t=2$ to $t=n$ \\
		\qquad For $i=1$ to $i=N$ \\
		\qquad \qquad Draw $\bar{y}_t^{(i)}$ in $\Omega$ independently from $p(y_t \vert y_{1:t-1}^{(i)})=p(y_t\vert y_{t-1}^{(i)}) $ \\
		\qquad \qquad Set $\bar{y}_{1:t}^{(i)}=\{y_{1:t-1}^{(i)},\bar{y}_t^{(i)}\}$ \\
		\qquad \qquad Evaluate the importance weights $\tilde{w}_t^{(i)}=p(x_t \vert \bar{y}_{1:t}^{(i)},x_{1:t-1})=p(x_t \vert x_{t-1},\bar{y}_t^{(i)})$ \\
		\qquad For $i = 1$ to $i = N$ \\
		\qquad \qquad Normalize the importance weights to obtain $w_t^{(i)}={\tilde{w}_t^{(i)}}\big/{\sum_{k=1}^N\tilde{w}_t^{(k)}}$\\
		\qquad For $i=1$ to $i=N$ \\
		\qquad \qquad Compute $\bar{a}_{t}^{(i)}=\bar{a}_{t-1}^{(i)}+\log(p(\bar{y}_{t}^{(i)}\mid y_{t-1}^{(i)}))+\log(p(x_t \mid x_{t-1}, \bar{y}_t^{(i)}))$ \\
		\qquad \qquad Set $y_{1:t}^{(i)}=\bar{y}_{1:t}^{(k)}$ and $a_t^{(i)} = \bar{a}_t^{(k)}$ with probability $w_t^{(k)}$, where $k\in \{1,\ldots,N\}$\\
		\enspace Set the maximizer $\hat{y}_{1:n}^N=y_{1:n}^{N_0}$, where $N_0=\argmax_{i\in \{1,\ldots,N\}} a_n^{(i)}$
	\end{tabbing}
	\vspace*{-10pt}    
\end{algorithm}

The following theorem establishes the consistency of Algorithm \ref{MAP.algorithm} for the TASS model. 

\begin{theorem}\label{coro.TASS.MAP}
	Let $\hat{y}_{1:n}^N$ be the output sequence from \eqref{max.appro} and $\hat{y}_{1:n}$ be the solution defined in \eqref{maximization}. If the data follows the TASS model, then, almost surely,
	\begin{eqnarray*}
		\lim_{N \to \infty}p(\hat{y}_{1:n}^N \vert x_{1:n})=\max_{y_{1:n}\in\Omega^n} \left\{p(y_{1:n} \vert x_{1:n})\right\}\,.
	\end{eqnarray*}
\end{theorem}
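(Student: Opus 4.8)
Write $M^*:=\sup_{y_{1:n}\in\Omega^n}p(y_{1:n}\mid x_{1:n})$, the randomness being over the particle trajectories $\{y_{1:n}^{(i)}\}$ generated by the bootstrap filter (Algorithm~\ref{bootstrap.filter.algorithm}) with the data $x_{1:n}$ held fixed. The plan is to establish, almost surely, the two bounds $\limsup_{N\to\infty}p(\hat{y}_{1:n}^N\mid x_{1:n})\le M^*$ and $\liminf_{N\to\infty}p(\hat{y}_{1:n}^N\mid x_{1:n})\ge M^*$, and then to note that $M^*$ is attained. The upper bound is immediate: since $\Omega^n_N=\{y_{1:n}^{(i)}:i=1,\ldots,N\}\subseteq\Omega^n$, we have $p(\hat{y}_{1:n}^N\mid x_{1:n})=\max_{y_{1:n}\in\Omega^n_N}p(y_{1:n}\mid x_{1:n})\le M^*$ for every $N$. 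So the content of the theorem is the lower bound, which I would reduce to the claim: for every $\delta>0$, almost surely there is an $N_\delta$ such that $\max_{i}p(y_{1:n}^{(i)}\mid x_{1:n})>M^*-\delta$ for all $N\ge N_\delta$; intersecting the corresponding almost-sure events over $\delta=1/k$, $k\in\mathbb{N}$, then yields $\liminf_N p(\hat{y}_{1:n}^N\mid x_{1:n})\ge M^*$ almost surely.

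To prepare for this claim I would, for a given $\delta>0$, exhibit a set of positive Lebesgue measure on which the posterior exceeds $M^*-\delta$. Using $p(y_{1:n}\mid x_{1:n})\propto p(x_1\mid y_1)\,p(y_1)\prod_{i=2}^n p(x_i\mid x_{i-1},y_i)\,p(y_i\mid y_{i-1})$, note that by Lemma~\ref{prob.property} the factors $p(x_1\mid y_1)$ and $p(x_i\mid x_{i-1},y_i)$ depend on $y_i$ only through its regime, hence are locally constant and, under the convention $Y_t\in[r_{j-1},r_j)$, right-continuous in each $y_i$; that $p(y_1)\equiv 1$ by Theorem~\ref{stationary}; and that each $p(y_i\mid y_{i-1})$ in \eqref{conditionpr2} is continuous off the diagonal $\{y_i=y_{i-1}\}$ and, when the Gamma shape satisfies $\alpha\ge1$, bounded, so that $p(\cdot\mid x_{1:n})$ is bounded. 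Picking a near-maximiser $y_{1:n}^*$ with $p(y_{1:n}^*\mid x_{1:n})>M^*-\delta/2$ at a point of continuity of $p(\cdot\mid x_{1:n})$, there is then a box $B=\prod_{i=1}^n[y_i^*,y_i^*+\eta_i)$ of positive Lebesgue measure, contained in a single product of regime cells, on which $p(\cdot\mid x_{1:n})>M^*-\delta$.

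The crux is to show that almost surely, for all large $N$, at least one resampled trajectory $y_{1:n}^{(i)}$ lies in $B$. I would obtain this from the consistency theory for bootstrap-filter MAP estimation in \citet{bootstrapfilter2013}, verifying its hypotheses for the TASS model: (i) the single-step state space $\Omega=[0,1)$ has compact closure $[0,1]$; (ii) the initialisation density $p(y_1)=1$ is strictly positive on $\Omega$; (iii) the transition density \eqref{conditionpr2} is strictly positive on $[0,1)$ for every $y_{t-1}\in[0,1)$, since its leading term $g_{\alpha,\beta}(y_t-y_{t-1})$ is positive on $\{y_t>y_{t-1}\}$ and the series $\sum_{j\ge1}g_{\alpha,\beta}(y_t-y_{t-1}+j)$ is positive everywhere; and (iv) the importance weights $p(x_t\mid x_{t-1},\bar{y}_t^{(i)})$ are strictly positive and, for the fixed data, take only the finitely many values \eqref{x2} indexed by the regime of $\bar{y}_t^{(i)}$, hence are bounded above and below by positive constants because $\Theta_m$ is compact. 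Properties (ii)--(iv) guarantee that the (unnormalised) empirical particle distribution on $\Omega^n$ at time $n$ converges as $N\to\infty$ to a measure equivalent to the joint smoothing distribution $p(\cdot\mid x_{1:n})$; in particular it assigns positive limiting mass to the positive-measure box $B$, so almost surely at least one particle trajectory falls in $B$ for all large $N$, whence $p(\hat{y}_{1:n}^N\mid x_{1:n})>M^*-\delta$ eventually. This proves the claim, and letting $\delta\downarrow 0$ along $1/k$ gives $p(\hat{y}_{1:n}^N\mid x_{1:n})\to M^*$ almost surely; finally $M^*=\max_{y_{1:n}\in\Omega^n}p(y_{1:n}\mid x_{1:n})$ because the posterior is a finite sum of functions each continuous on the closure of a regime cell. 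The main obstacle is the passage in (ii)--(iv) from kernel positivity to asymptotic denseness of the \emph{resampled} particle cloud: resampling introduces dependence between trajectories and drives individual weights to be of order $1/N$, so this step genuinely rests on the weak-convergence and almost-sure results for sequential Monte Carlo established in \citet{bootstrapfilter2013}; our task is to confirm that the TASS densities meet that paper's regularity requirements and to absorb into the $\delta$-argument the two mild irregularities — the jumps of the observation density at the thresholds, and, for small Gamma shape $\alpha$, the singularity of $g_{\alpha,\beta}$ at the origin.
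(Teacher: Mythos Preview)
Your proposal and the paper's proof both ultimately appeal to the same external result, Theorem~1 of \citet{bootstrapfilter2013}, but the paper is far more direct: it simply records that theorem's four hypotheses --- (1) $x_{1:n}$ is fixed; (2) the likelihood $l_t(y_{1:t})=p(x_t\mid y_{1:t},x_{1:t-1})$ is bounded in $y_{1:t}$; (3) $\int l_t(y_{1:t})\,p(y_{1:t}\mid x_{1:t-1})\,dy_{1:t}>0$; (4) the maximiser of $p(y_{1:t}\mid x_{1:t})$ exists and $p(y_{1:n}\mid x_{1:n})$ is continuous at its global maxima --- and checks each in a line using the explicit TASS densities \eqref{conditionpr2}, \eqref{x1}, \eqref{x2}. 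Your conditions (i)--(iv) (compact closure, positive prior, positive transition kernel, bounded positive weights) are not the hypotheses of that theorem; they are generic SMC regularity assumptions, and the upper/lower-bound scaffolding together with the box $B$ around a near-maximiser is essentially re-deriving the content of the cited result rather than invoking it. Once you accept the reference, the whole argument collapses to ``verify the four conditions,'' and your longer route buys nothing extra. (A minor slip: in your (iv) the compactness of $\Theta_m$ is irrelevant --- the weights take finitely many values because $p(x_t\mid x_{t-1},y_t)$ depends on $y_t$ only through its regime, for \emph{fixed} $\theta_m$.)

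That said, you are more careful than the paper on one point. The paper dispatches condition (4) by asserting that $p(y_{1:n}\mid x_{1:n})$ is ``continuous and differentiable,'' which is not literally true: the observation factors jump at the thresholds $r_j$, and for $\alpha<1$ the transition density \eqref{conditionpr2} blows up along $\{y_t=y_{t-1}\}$. Condition (4) only requires continuity \emph{at the maximiser}, and your device --- picking the near-maximiser at a point of continuity and confining the box $B$ to a single product of regime cells --- is exactly what is needed to make that verification honest. So your extra work is not wasted; it would be better deployed as a one-sentence justification of condition (4) inside the paper's short proof rather than as a separate upper/lower-bound argument.
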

By Theorem \ref{coro.TASS.MAP}, $\hat{y}_{1:n}^N$ can serve as an estimate of the latent process $\{Y_t\}_{t=1,\ldots,n}$. Therefore, change-points can be estimated as time points at which the estimated latent process hits the estimated threshold.
With the estimated latent process $\hat{y}_{1:n}^N$, denote the threshold to be hit at the first change-point as $\hat{r}_i = \inf\{\hat{r}_j: \hat{r}_j > \hat{y}_1^N, j =1,\ldots,\hat{m}\}$. Thus, the change-points can be estimated as
\begin{eqnarray}
	\hat{t}_1&=&\min\{t> 0: \hat{y}^{N}_t \geq \hat{r}_i \}\,, \  
	\hat{t}_2 =\min\{t> \hat{t}_1: \hat{y}^{N}_t \geq \hat{r}_{{ (i + 1) \mbox{ mod } \hat{m}}} \}\,,  \ldots\,,  \nonumber \\  
	\hat{t}_k&=&\min\{t> \hat{t}_{k-1}: \hat{y}^{N}_t \geq \hat{r}_{ {(i+ k-1) \mbox{ mod } \hat{m}}} \}\,,\ldots \label{eq:est.latent}
\end{eqnarray}

\subsection{Diagnostic Checks} \label{sec:diag_check}

In general, predicting future change-points is a challenging problem. Clearly, when there are too few change-points in the data, or the occurrence of change-points has no recurring pattern or structure, there is little hope to have a good prediction. The proposed TASS model provides one particular structure on the occurrence of change-point that is helpful for predicting future change-points. Although this model may not be applicable to all data sets, we develop a 
diagnostic check procedure to determine whether a given time series is suitable to be analysed by the TASS model. 

First, we define the residuals of the TASS model. Suppose that we have fitted an $m$-regime
TASS model based on \eqref{model} and \eqref{Y.RW}, where $\epsilon_{t}\overset{i.i.d.}{\sim}\text{Gamma}(\alpha,\beta)$
and $e_{t}\overset{i.i.d.}{\sim}\text{N}(0,1)$, to a time series $\{ X_{1},\ldots,X_{n}\}$.
Denote the estimated parameters as $\hat{\theta}_{m}=(\hat{\phi}_{1},\hat{a}_{1},\hat{\sigma}_{1},...,\text{\ensuremath{\hat{\phi}_{m},\hat{a}_{m},\hat{\sigma}_{m}},\ensuremath{\hat{r}_{1},...,\text{\ensuremath{\hat{r}_{m-1},\hat{\alpha},\hat{\beta}}}}})$, 
and the estimated latent process as $\hat{Y}_{t}$. 
Based on \eqref{model} and \eqref{Y.RW}, we define the residuals as 

\begin{eqnarray}
	\hat{e}_{t} & =&\begin{cases}
		\dfrac{X_{t}-\hat{a}_{1}-\hat{\phi}_{1}(X_{t-1}-\hat{a}_{1})}{\hat{\sigma}_{1}}\,,  & \hat{Y}_{t}\in[0,\hat{r}_{1})\,,\\
		\dfrac{X_{t}-\hat{a}_{2}-\hat{\phi}_{2}(X_{t-1}-\hat{a}_{2})}{\hat{\sigma}_{2}}\,, & \hat{Y}_{t}\in[\hat{r}_{1},\hat{r}_{2})\,,\\
		\vdots\\
		\dfrac{X_{t}-\hat{a}_{m}-\hat{\phi}_{m}(X_{t-1}-\hat{a}_{m})}{\hat{\sigma}_{m}}\,, & \hat{Y}_{t}\in[\hat{r}_{m-1},1)\,,
	\end{cases} \label{eq: esti.XRes} \\
	\hat{\epsilon}_{t} & =&\begin{cases}
		\hat{Y}_{t}-\hat{Y}_{t-1}\,, & \hat{Y}_{t-1}\leq\hat{Y}_{t}\,,\\
		\hat{Y}_{t}-\hat{Y}_{t-1}+1\,, & \hat{Y}_{t-1}>\hat{Y}_{t}\,.
	\end{cases} \label{eq: esti.YRes}
\end{eqnarray}

Discrepancy between the data and estimated model can be measured by investigating the residuals $\{\hat{e}_t\}$ and $\{\hat{\epsilon}_t\}$. If the TASS model fits the data adequately, then $\{\hat{e}_t\}$, $\{\hat{\epsilon}_t\}$ should be white noise with marginal distributions following $\text{N}(0,1)$ and Gamma$(\hat{\alpha},\hat{\beta)}$, respectively. These properties can be readily checked by applying standard procedures such as Ljung-Box test, Anderson–Darling test, QQ-plot and ACF plot. See Section \ref{sec:real} below for illustrations. 

\section{Prediction of Future Change-points} \label{prediction.method}

In this section, we discuss the prediction of future change-points using the TASS model. Recall that $X_t$ enters the next regime at the time when the latent process $Y_t$ crosses a threshold. Therefore, predicting the next change-point is equivalent to predicting when $Y_t$ hits its succeeding threshold.

Denote $r_k(y_n)$ as the $k$th subsequent threshold value given the latent process is at $y_n$, i.e. $r_k(y_n)= \inf\{r_{(j+k-1)\mod{m{}}} \mid r_j>y_n, j=1,\ldots,m\}$. 
The sample version $\hat{r}_{k}(y_n)$ can be analogously defined.  
Since the range of the 
threshold values is $[0,1]$ and 
the latent process repeatedly travels across the range for $\left\lfloor \frac{k-1}{m} \right\rfloor $ times
before reaching the $k$th subsequent threshold value, 
the time till the $k$th change-point $T_k$ can be represented as 
\begin{eqnarray*}
	T_k=\mbox{inf}\left\{t \in\mathbb{Z}^{+}: \sum_{j=n+1}^{n+t} \epsilon_j \geqslant r_k(y_n)+\left\lfloor \frac{{k-1}}{m} \right\rfloor -y_n, \  t=1,2,\ldots \right\} \,.
\end{eqnarray*}

As the current time is $n$, the time of the next change-point $\tau_k$ is given by $\tau_k=n+T_k$. Our goal is to characterize the distribution of $T_k$ given the data $x_{1:n}$. 

Observe that the events $\{T_k>t\}$ and $\left\{\sum_{j=n+1}^{n+t}\epsilon_j <r_k(y_n) +\left\lfloor \frac{k-1}{m} \right\rfloor -y_n\right\}$ are equivalent. 
Thus,
\begin{eqnarray}
	&& P\left(T_k>t | x_{1:n}, \theta_m \right) \nonumber \\
	&=&\int P\left(T_k>t | y_{1:n}, x_{1:n}, \theta_m)P(y_{1:n}| x_{1:n}, \theta_m \right)dy_{1:n} \nonumber \\ 
	&=& \int P\left(\sum_{j=n+1}^{n+t}\epsilon_j < r_k(y_n)-y_n \right) P\left(y_{1:n}| x_{1:n},  \theta_m\right)dy_{1:n}\,. \label{pred.int}
\end{eqnarray}

We now discuss a simple procedure to estimate $P(T_k>t | x_{1:n}, \theta_m)$ in \eqref{pred.int}. 
First, based on the model setting, $\sum_{j=n+1}^{n+t}\epsilon_j \sim \mbox{Gamma}(t \alpha, \beta)$. Thus, 
\begin{eqnarray*}
	P\left(\sum_{j=n+1}^{n+t}\epsilon_j < r_k(y_n)-y_n \right) = G(r_k(y_n)-y_n ; t\alpha,\beta)\,,
\end{eqnarray*}
where $G(\cdot;a,b)$ is the distribution function of $\mbox{Gamma}(a,b)$.  
Next, the integral in \eqref{pred.int} can be approximated by Monte Carlo. 
Specifically, if we have conducted the MAP sequence estimation, we have  
simulated $N$ paths of $y_{1:n}^{(i)}, i=1,\ldots,N$ by Algorithm \ref{MAP.algorithm} based on the posterior distribution $p(y_{1:n}|x_{1:n},\hat{\theta}_m)$. Therefore, we can approximate $P(T_k> t \mid x_{1:n},\theta_m)$ by
\begin{align} \label{CP.dist}
	\hat{P}\left(T_k> t \mid x_{1:n},\hat{\theta}_m\right) = \frac{1}{N}\sum_{i=1}^{N}G\left( \hat{r}_k(y_n^{(i)}) +\left\lfloor \frac{k-1}{m} \right\rfloor -y_n^{(i)}; t\hat{\alpha}, \hat{\beta}\right)\,.
\end{align}

With the distribution of $T_k$ in \eqref{CP.dist}, we can predict the next change-point and further generate its prediction interval. 
Specifically, the predictor of the next change-point can be constructed as 
\begin{eqnarray}\label{predict1}
	\hat{\tau}_k=n+\widehat{\mathbb{E}}(T_k) = n+\sum_{t=1}^{\infty}\sum_{i=1}^{N}\frac{1}{N} G\left(\hat{r}_k(y_n^{(i)}) +\left\lfloor \frac{k-1}{m} \right\rfloor - y_n^{(i)}; t\hat{\alpha}, \hat{\beta}\right)\,.
\end{eqnarray} 

Furthermore, the $(1-\alpha_0)\%$  prediction interval $(\hat{\tau}_k^{(l)}, \hat{\tau}_k^{(r)})$ can be obtained by solving the equations
\begin{align*}
	\frac{1}{N}\sum_{i=1}^{N}\left[1-G\left(\hat{r}_1(y_n^{(i)}) +\left\lfloor \frac{k-1}{m} \right\rfloor -y_n^{(i)};\hat{\tau}_1^{(l)}\hat{\alpha}, \hat{\beta}\right)\right]&=\frac{\alpha_0}{2}\,,\\
	\frac{1}{N}\sum_{i=1}^{N}\left[1-G\left(\hat{r}_1(y_n^{(i)}) +\left\lfloor \frac{k-1}{m} \right\rfloor -y_n^{(i)};\hat{\tau}_1^{(r)}\hat{\alpha}, \hat{\beta}\right)\right]&= 1-\frac{\alpha_0}{2} \,.
\end{align*}

\section{Simulation Studies} \label{sec:sim}

\subsection{Parameter Estimation}
The first time series $\{X_t\}$ is generated from the following two-regime TASS model:
\begin{align} \label{TASS_gam_50_e}
	\begin{split}
		X_t &=
		\begin{cases}
			-3-0.3(X_{t-1}+3)+ e_t\,, &Y_t \in [0,0.6)\,, \\
			2+0.6(X_{t-1}-2)+2 e_t\,, &Y_t \in [0.6,1) \,,
		\end{cases} \\ 
		Y_t &=
		\begin{cases}
			Y_{t-1}+ \epsilon_t\,, &\mbox{if } Y_{t-1} + \varepsilon_t <1\, ,  \\
			Y_{t-1}+ \epsilon_t- \lfloor Y_{t-1}+ \epsilon_t \rfloor\,,  &\mbox{otherwise}\, ,
		\end{cases}
	\end{split}
\end{align}
where $\epsilon_t \overset{i.i.d.}{\sim} \mbox{Gamma}(0.5, 50)$ and $e_t \overset{i.i.d.}{\sim} \mbox{N}(0,1)$. The average length of the first and the second regime are 60 and 40, respectively. 

\begin{table}
	\caption{Estimation results for the TASS(2) model (21).}
	\label{gam.esti}
	\centering
	\begin{tabular}{ccccccccccc}
		\hline
		&Parameter  &$\phi_1$  & $\phi_2$ & $a_1$  & $a_2$  & $\alpha$  & $\beta$ & $r_1$ &$\sigma_1$&$\sigma_2$ \\\hline
		&True value & $-0.3$& $0.6$  & $-3$  & $2$  & $0.5$   & $50$  & $0.6$  &1&2\\
		$n=1000$&Mean & -0.303 &  0.593 & -3.003&  1.885 &  0.611& 58.84 &
		0.602 &  1.005 &  1.994 \\
		&RMSE & 0.035 &  0.046 &  0.032 &  0.297 &  0.273 &
		23.21 & 0.026 &  0.032&  0.075\\
		$n = 2000$ & Mean &  -0.304 &  0.599 & -3.004 &  1.880 &  0.563 & 54.02 & 0.602 &  1.005 &  2.002 \\
		& RMSE & 0.024 &  0.033 &  0.022 &  0.228 &  0.166 &
		14.74 & 0.017 &  0.022 &  0.052 \\
		$n=3000$ & Mean &   -0.304 &  0.601 & -3.003&  1.885 &   0.517 & 49.35 & 0.601 & 1.005 &  2.006\\
		& RMSE & 0.019 &  0.026 &  0.019 &  0.196 &  0.124 & 10.08&
		0.014 &  0.019 &  0.043\\
		\hline
	\end{tabular}
\end{table}

We estimate the parameter vector $\theta_2=(\phi_1, \phi_2,a_1, a_2, \alpha, \beta, r_1 , \sigma_1, \sigma_2)$ under sample sizes of $n = 1000, 2000, 3000$, respectively. 
The number of replications for each setting is 500. 
Note that the second term $\sum_{j=1}^{\infty}g_{\alpha,\beta}(y_t-y_{t-1}+j)$ on the right hand side of \eqref{conditionpr2} needs to be truncated by some constant $C$, 
i.e, $\sum_{j=1}^{C}g_{\alpha,\beta}(y_t-y_{t-1}+j)$. 
We suggest $C=100$ such that the error caused by the truncation is negligible. 
The estimates and root mean square errors (RMSE)  are reported in Table \ref{gam.esti}. 
The estimates of all parameters are close to the true value. 
Also, as the sample size increases, the RMSE of the parameter estimates decreases.

\subsection{Change-point Prediction}

Next, we investigate the performance of the prediction algorithm for models \eqref{TASS_gam_50_e}. 
First, we estimate the MAP sequence $\hat{y}_{1:n}$ by Algorithm $\ref{MAP.algorithm}$ with $N=500$. 
Then, we compute the predictor $\hat{\tau}$ for the next change point $\tau$ and construct prediction intervals using the procedure proposed in Section \ref{prediction.method}. 
To systematically study the prediction accuracy, we report the root-mean prediction error (PE) of 500 replications of the predictor $\hat{\tau}$ in Table \ref{gam.PE}. 
Moreover, we calculate the coverage rate (CR) of the prediction intervals. 

It can be seen that the prediction error decreases as the sample size increases. 
Also, the coverage rate is close to the confidence level of the prediction interval, 
which reflects the precision of the prediction intervals.

\begin{table}[!h]
	\centering
	\caption{Prediction errors and coverage rates of prediction intervals for the TASS(2) model \eqref{TASS_gam_50_e}.}
	\begin{tabular}{ccccc}
		\hline
		
		& Prediction Error & \multicolumn{3}{c}{Coverage rate} \\
		\cline{3-5}
		& & 80\% P.I.  & 90\% P.I. & 95\% P.I. \\
		\hline
		$n=1000$  &10.97 & 78.2\%& 86.8\%& 93.4\%\\
		$n=2000$ &  10.49 & 79.2\% & 89.6\% & 94.2\%\\
		$n=3000$ & 10.23 & 79.6\% & 89.8\% & 95.0\%\\
		
		\hline
	\end{tabular}
	\label{gam.PE}
\end{table}

\subsection{Diagnostic Checks}

In this section, we investigate the performance of the proposed diagnostic checks through studying the empirical sizes and powers of Ljung-Box test and Anderson-Darling test for the residuals $\{\hat{e}_t\}$ and $\{\hat{\epsilon}_t\}$ of model \eqref{TASS_gam_50_e}, respectively. 
Sample sizes of $n=1000, 2000, 3000$ are considered with 500 replications in each case. 
Table \ref{tb:empi_size} reports the empirical sizes for a nominal level of $\alpha = 5 \% $. 
Observe that the empirical sizes approach 5$\%$ as the sample size increases. 

\begin{table} [!h]	
	\caption{Empirical sizes for a nominal level of 5$\%$ in the Ljung-Box test (LBT) and  Anderson-Darling test (ADT) for the residuals $\{\hat{e}_t\}$ and $\{\hat{\epsilon}_t\}$ of model (21).} 
	\label{tb:empi_size}
	\centering
	\begin{tabular}{ccccc}
		\hline
		& LBT for $\{\hat{e}_t\}$ & LBT for $\{\hat{\epsilon}_t\}$ & ADT for $\{\hat{e}_t\}$ & ADT for $\{\hat{\epsilon}_t\}$\\ \hline
		$n = 1000$ & 6.4$\%$ & 3.8$\%$ & 7$\%$ & 4.2$\%$\\
		$n = 2000$ & 4.2$\%$ & 4.2$\%$ & 6$\%$ & 5.4$\%$\\
		$n = 3000$ & 4.6$\%$ & 4.8$\%$ & 5.2$\%$ & 4.4$\%$\\
		\hline
	\end{tabular}
\end{table}

Next, we study the empirical power of the diagnostic check for TASS$(2)$ model when the time series data is not suitable for TASS$(2)$. 

Let $I_1=[s_1:e_1] \coloneqq \{s_1,s_1+1,\cdots ,e_1 \}$, where $s_1 =1, e_1 = 12$ and $I_i = [e_{i-1}+1:e_{i-1}+10+2i]$ for $i = 2,3,...,m.$  We simulate $\{X_t\}$ from the model 

\begin{eqnarray} 
	\label{eq:model_power}
	X_t = \begin{cases}
		3+0.3(X_{t-1}-3)+e_t\,, & t \in I_j  \text{ and } j \equiv 1 (\text{mod 2})\,,\\
		0.3X_{t-1}+e_t\,, &t \in I_j  \text{ and } j \equiv 0  (\text{mod 2})\,,
	\end{cases}
\end{eqnarray}
where $e_t \overset{iid}{\sim} t_5$. Since the mean changes in model \eqref{eq:model_power} occur with increasing time lags, the residual $\{\hat{\epsilon}_t\}$ cannot be modeled by a gamma distribution with a pair of fixed ($\hat{\alpha}$, $\hat{\beta}$). 
Also, $\{\hat{e}_t\}$ is not normally distributed. 
We conduct the Ljung-Box test and Anderson-Darling test for the residuals $\{\hat{e}_t\}$ and $\{\hat{\epsilon}_t\}$ of model \eqref{eq:model_power} under $m = 30, 40, 50$, corresponding to sample sizes $n = 1230, 2040, 3050$, respectively. 
Table \ref{tb:empi_power} reports the empirical powers for a nominal size of 5$\%$. 
Observe that the empirical powers increase as $m$ increases. 
Also, the empirical powers of Anderson-Darling test to $\{\hat{e}_{t}\}$ and $\{\hat{\epsilon}_{t}\}$ are close to 1 when $m$ is large.  
Therefore, the diagnostic checks successfully detect the data that is not suitable to be modeled by 
a TASS model.

\begin{table}[!h]
	
	\caption{Empirical powers for a nominal level of 5$\%$ in the Ljung-Box test(LBT) and Anderson-Darling test (ADT) for the residuals of TASS(2) model fitting based on data from (23).}
	\centering
	\begin{tabular}{ccccc}
		\hline
		& LBT for $\{\hat{e}_t\}$ & LBT for $\{\hat{\epsilon}_t\}$ & ADT for $\{\hat{e}_t\}$ & ADT for $\{\hat{\epsilon}_t\}$\\
		$m = 30$ & 44.4\% & 5.6\% & 99.6\% & 85\% \\
		$m = 40$ & 51.2\% & 7.8\% & 100\% & 96\% \\
		$m = 50$ & 61.4\% & 10.2\% & 100\% & 98.6\% \\
		\hline
		
	\end{tabular}
	\label{tb:empi_power}
\end{table}

\section{Real Data Application}\label{sec:real}

In this section we apply the proposed methodology to the energy consumption data from PJM Interconnection LLC, which is a regional transmission organization  in the United States. 
This dataset consists of hourly energy consumption in Washington, DC covering the period  2005-2018, and is available on the website \textit{https://www.kaggle.com/robikscube/hourly-energy-consumption}. 
To ease computation, we convert the hourly data into 630 observations of weekly energy consumption. 
The time series is plotted in Figure \ref{week.pic}. 
It can be observed that the energy consumption volume admits different behaviors under different periods. 
Predicting future change-points in the data can potentially help the company better allocate resources to meet the electricity consumption needs.  

\begin{figure}[h!]
	\centering
	\caption{Weekly energy consumption data. The units of $x$-axis and $y$-axis are week and $10^3$ Megawatts, respectively.}
    \includegraphics[width = \textwidth]{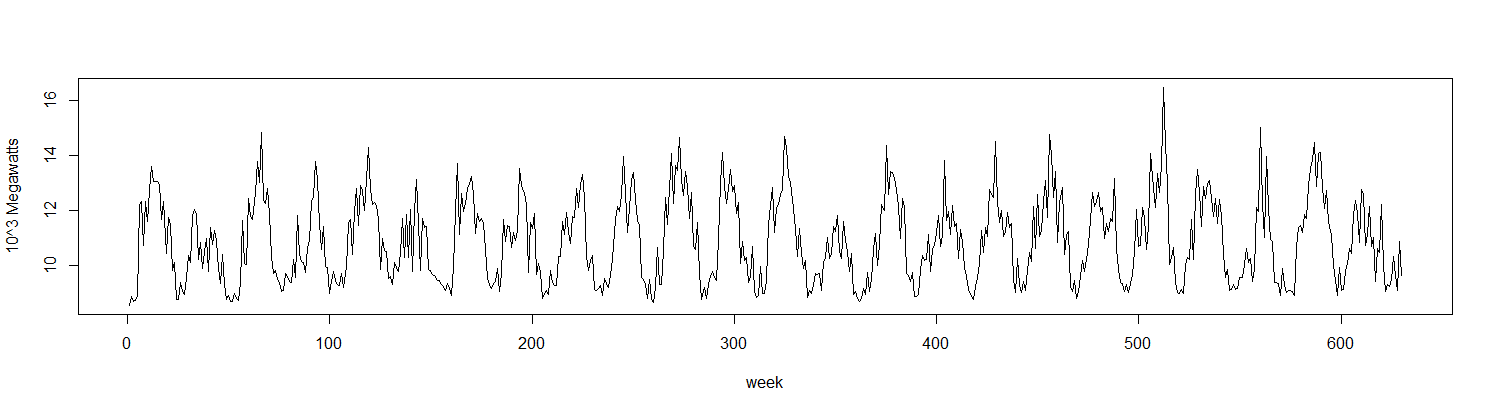}
	
	\label{week.pic}
\end{figure} 

To evaluate the effectiveness of the proposed TASS model, we split the dataset into training data~($t = 1, \ldots, 555$) and test data~($t = 556, \ldots, 630$). For the training data, we compare the BIC of one, two and three-regime TASS models, yielding two as the estimated number of regimes.

The CTL$_2$ estimation is applied to both the training and full data, and the results are given in Table \ref{est.par}. 
Observe that the two groups of estimates are quite close to each other. 
Also, the intercept and standard deviation of the two regimes have large differences. The intercept $a$ of the two regimes are around 9 and 11, respectively. 
The standard deviation $\sigma$ of the two regimes are 0.3 and 1, respectively. 
This shows the difference in  electricity consumption demand under different regimes. During the second regime, the electricity consumption is higher and fluctuates more.   The threshold value is around 0.3, which suggests that the average time of the high electricity consumption regime is around 70\%.  

We estimate change-points by the MAP sequence estimation method proposed in Section \ref{sec:MAP}. 
Figure \ref{one-year.pic} plots the estimated change-points in the first 60 weeks based on the training data and full data. 
It can be seen that there are roughly two peaks and troughs of electricity consumption within one year. The dataset starts from May 2005. 
Matching the change-points with the real-time periods, we found there are two high electricity consumption periods. 
One is from middle June to middle October. 
The other is from the end of November to early April of the next year. 
That is, the electricity consumption demand is high in summer and winter. 
Similarly, the remaining period in Figure \ref{one-year.pic}, spring and autumn, corresponds to the low electricity consumption demand regime. 
The remaining change-points not shown in Figure \ref{one-year.pic} also exhibit the same pattern.

\begin{figure}[h!]
	\centering
	\caption{Change-points estimated by MAP algorithm in first 60 weeks. Left panel: results from training data. Right panel: results from full data.}
	\includegraphics[width=\textwidth]{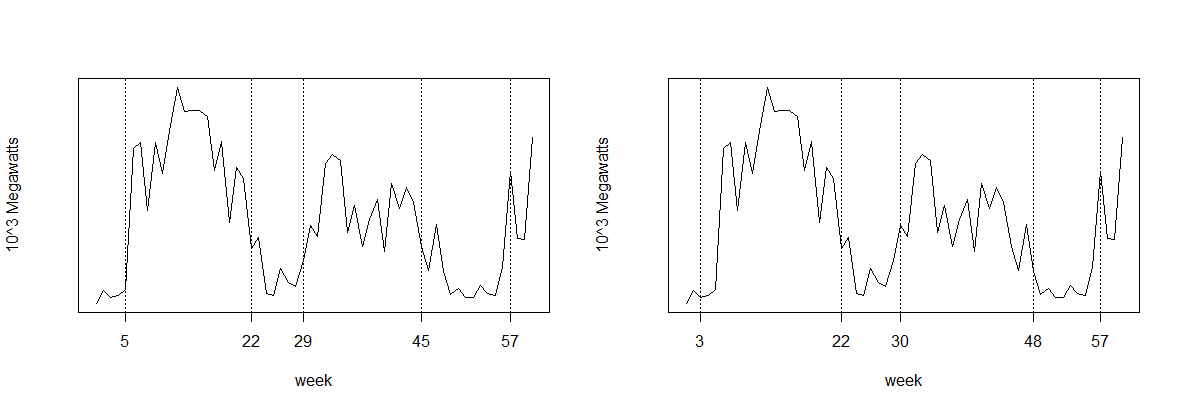}
	
	\label{one-year.pic}
\end{figure} 

\begin{table}
	\centering
	\caption{Parameter estimation results for the PJM dataset using a two-regime TASS model.)} 
	\begin{tabular}{cccccccccc}
		\hline
		Parameter  &$\phi_1$  & $\phi_2$ & $a_1$  & $a_2$  & $\alpha$ & $\beta$ & $r_1$ &$\sigma_1$&$\sigma_2$ \\\hline 
		Estimates (full data)  &0.328 &  0.617 &  9.276&  11.60 &  0.223&  4.910& 0.279 &  0.350 &  1.045
		\\ 
		Estimates (training data) & 0.324 &  0.628 &  9.272 & 11.60 &  0.211 &  5.088 &  0.298 &  0.351 &  0.990\\
		\hline
	\end{tabular}
\label{est.par}
\end{table}

Next, based on the training data, we predict the future change-points after time $t = 555$ using the prediction method proposed in Section \ref{prediction.method}. 
The prediction results are reported in Table \ref{pred.result}. 
For comparison, we also estimate the change-points using the full data, resulting in six change-points after $t=555$, which are treated as the ``true" value of the change-points. 
Denote $\tau_i$ as the $i$-th change-point in the test data. It can be seen from Table \ref{pred.result} that  
the predicted values of $\tau_1, \ldots, \tau_6$ are very close to  the ``true" values. 

\begin{table}[!h]
	\centering
	\caption{Change-point prediction results for the PJM dataset.} 
	\begin{tabular}{cccccccc}
		\hline
		&``True" value&  Predicted value & 80\% P.I. && 90\% P.I. && 95\% P.I. \\ \hline 
		$\tau_{1}$  & 567 & 569 & (560, 580) && (558, 584) && (557, 588) \\ 
		$\tau_{2}$  & 578 & 577 & (565, 590) && (562, 595) && (560, 599)\\
		$\tau_{3}$  & 596 & 594 & (577, 612) && (574, 618) && (570, 623)\\
		$\tau_{4}$  & 603 & 602 & (584, 621) && (579, 627) && (576, 633)\\  
		$\tau_{5}$  & 621 & 619 & (597, 641) && (592, 648) && (587, 655)\\
		$\tau_{6}$  & 629 & 626 & (604, 650) && (598, 657) && (593, 664)\\
		\hline
	\end{tabular}
	
	\label{pred.result}
\end{table}

To evaluate the overall performance of change-point detection and prediction, we plot all estimated and predicted change-points in Figure \ref{pred.pic}. 
The number of change-points detected by the full data and training data before $t=555$ both equal to 43. 
Together with the six change-points to be predicted in the test data, there are in total 49 change-points in the dataset. 

The change-points detected based on the full data and training data are depicted by bottom dots and top dots, respectively. 
Also, the predicted change-points based on the training data are plotted by triangles. 
From Figure \ref{pred.pic}, we can observe that the change-points detected and predicted based on the training data are not far from the change-points detected based on the full data.

To investigate the overall accuracy of change-points predicted in test data, we calculate the prediction error of change-points. 
Denote $\hat{\tau}_i$ and $\tau_i$ as the predicted value and ``true" value of the $i$th change-point, respectively.  
The rooted mean squared prediction error is $\sqrt{ \left.\sum_{i=1}^{6}(\hat{\tau_i}-\tau_i)^2\right/6  } = 1.96$, indicating that the  change-points can be predicted within 1.96 weeks on average. 

\begin{figure}[!h]
	\centering
	\caption{Change-points detected based on the full data (bottom dots) and training data (top dots); change-points predicted based on the training data (top triangles).}
	\includegraphics[width=\textwidth]{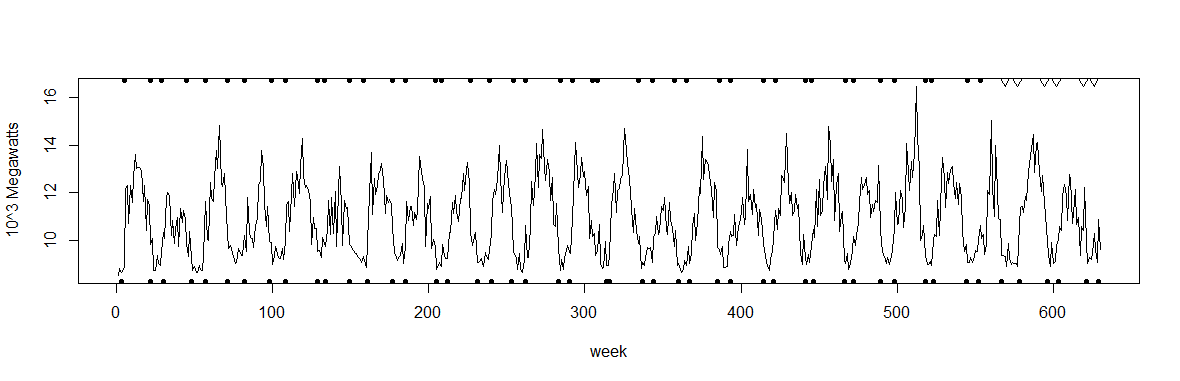}
	
	\label{pred.pic}
\end{figure}

In time series analysis, a sequence with periodic behavior can also be analyzed by models with seasonal effects, for example, the SARIMA model. 
We fit an SARIMA model based on the training data to predict the future observations. 
We try different orders of SARIMA and use BIC to conduct model selection, which results in the following SARIMA$(1,0,0)\times(1,0,0)_{27}$ model:
\begin{eqnarray*}
	X_t = 2.4 + 0.72X_{t-1}+0.23X_{t-27}-0.17X_{t-28}+e_t , \quad e_t \sim N(0, 0.8552)\,.
\end{eqnarray*}
For comparison, the prediction is also conducted by the TASS model based on the training data. Specifically, given the simulated $N$ paths of $y_{1:n}^{(i)}, i = 1, \ldots,N$ by Algorithm 2, we can further simulate $N$ paths of future $y_t^{(i)}$ and $x_t^{(i)}$, $t >n, i=1,\ldots,N$ using the parameter estimates. 
Hence, the future $X_t$ is predicted as $1/N\sum_{i=1}^N x_t^{(i)}$.

The prediction accuracy of the two methods are compared on the test data. 
Out of the 75 weeks electricity volumes to be predicted, the prediction of TASS model outperforms the SARIMA model for 52 weeks. 
Denote $X_t$ and $\hat{X}_t$ as the true value and predicted value of the electricity volume. 
The root mean square error (RMSE), mean absolute error (MAE) and mean absolute percentage error (MAPE) of the prediction on test data are calculated by 
$ \sqrt{\left.  \sum_{t = 556}^{630}(\hat{X}_t-X_t)^2\right/ 75} $, 
$\left.\sum_{t=556}^{630} \left| \hat{X}_t -X_t \right| \right/ 75$ and
$\left.\sum_{t=556}^{630} \left| {\hat{X}_t -X_t}\right| \right/(75{X_t})$, respectively. 
The results are reported in Table \ref{pred.comparison}.
The RMSE, MAE and MAPE of the TASS model are all smaller than those of SARIMA, indicating that the prediction power of the TASS model is better than the classical time series model with seasonality.

\begin{table}[!h]
	\centering
	\caption{Comparison of the prediction result based on the TASS model and SARIMA model.} 
	\begin{tabular}{cccc}
		\hline
		& RMSE & MAE & MAPE \\ \hline 
		SARIMA & 1.52 & 1.27 & 11.55\% \\
		TASS & 1.17 & 0.88 & 7.78\% \\
		\hline
	\end{tabular}
	
	\label{pred.comparison}
\end{table}

Finally, we conduct a diagnostic check for the real data. We use the training data set to conduct the diagnostic check procedure discussed in Section \ref{prediction.method}. 
For any $n$ i.i.d. random variables $\{X_i\}_{i=1}^{n}$, the order statistic $X_{(k)}$ follows a $ \text{Beta}(k,n+1-k)$ distribution. The $(1-\alpha) $ confident band for $X_{(k)}$ can be approximated by $[F^{-1}(\beta_{\alpha /2,k}), F^{-1}(\beta_{1-\alpha/2, k})]$, where $F^{-1}(\cdot)$ is the quantile function of distribution of $X$ and $\beta_{\tau,k}$ is the $\tau$-quantile of Beta$(k,n+1-k)$ distribution.
Figure \eqref{qq_normal} provides the QQ-plot between N$(0,1)$
and $\{\hat{e}_{t}\}$ with dotted lines representing $95\%$ confident bands.  
Similarly, Figure \eqref{qq_gamma}  provides the QQ-plot between Gamma$(\hat{\alpha},\hat{\beta})$
and $\{\hat{\epsilon}_{t}\}$. 
It can be seen that the majority of points fall along the  QQ lines and within the corresponding 95$\%$ confidence bands. 
That is, the quantile of the residuals $\{\hat{e}_t\}$ and $\{\hat{\epsilon}_t\}$ agree with that of N$(0,1)$ and Gamma$(\hat{\alpha},\hat{\beta})$ distribution, respectively. 
We use the methods discussed in Section \ref{sec:diag_check} to find the residuals $\{\hat{e}_{t}\}$ and $\{\hat{\epsilon}_{t}\}$ of the data, and apply the Anderson-Darling test to each of the residuals. We obtained a $p$-value 0.639 and 0.215 for $\{\hat{e}_{t}\}$ and $\{\hat{\epsilon}_{t}\}$, respectively, which supports that the residuals fit the proposed marginal distributions. 
Figures \eqref{acf_normal} and \eqref{acf_gamma} provide the ACF plots of $\{\hat{e}_t\}$ and $\{\hat{\epsilon}_t\}$, respectively. 
These ACF plots suggest that $\{\hat{e}_t\}$ and $\{\hat{\epsilon}_t\}$ are indeed white noises. 
Also, we apply the Ljung-Box test with a lag equals 12 for autocorrelations in the residuals $\{\hat{e}_t\}$ and $\{\hat{\varepsilon}_t\}$, and the resulting $p$-values are 0.058 and 0.983, respectively. 
Thus, the autocorrelations of $\{\hat{e}_t\}$ and $\{\hat{\epsilon}_t\}$ are not significantly different from $0$. 
Similar results can be obtained when full data are used. In conclusion, the TASS model fits the data adequately. 

\begin{figure}[htp] 
	\centering
	\caption{QQ-plots and ACF plots of the residuals $\{{\varepsilon}_t\}$ and $\{{e_t}\}$ of training data.}
	\subfloat[QQ-plot between N$(0,1)$ and $\{\hat{e}_{t}\}$]{%
		\includegraphics[scale=0.22]{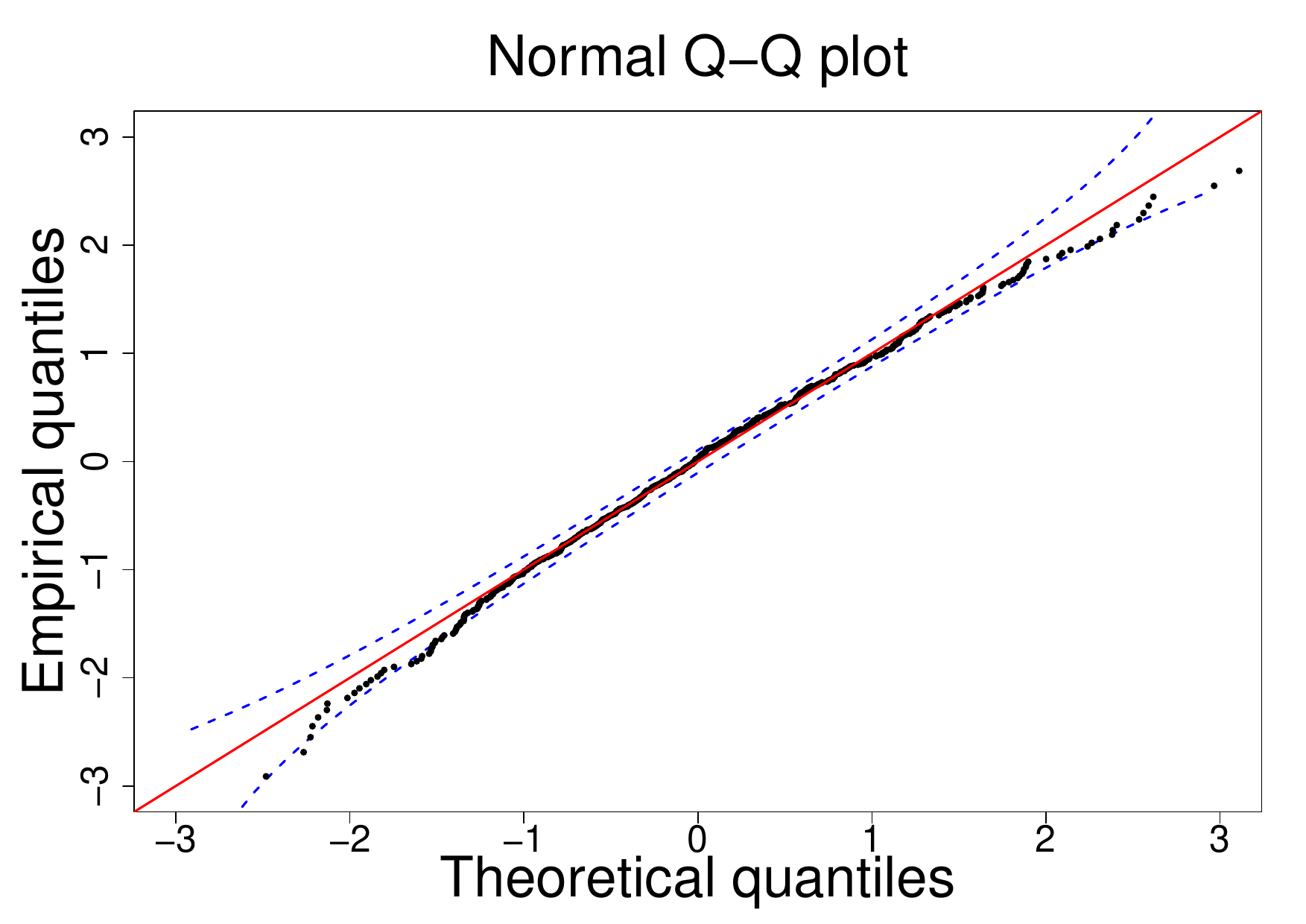} %
		\label{qq_normal}
	}%
	\subfloat[QQ-plot between  Gamma$(\hat{\alpha},\hat{\beta})$ and $\{\hat{\epsilon}_t\}$]{%
		\includegraphics[scale=0.22]{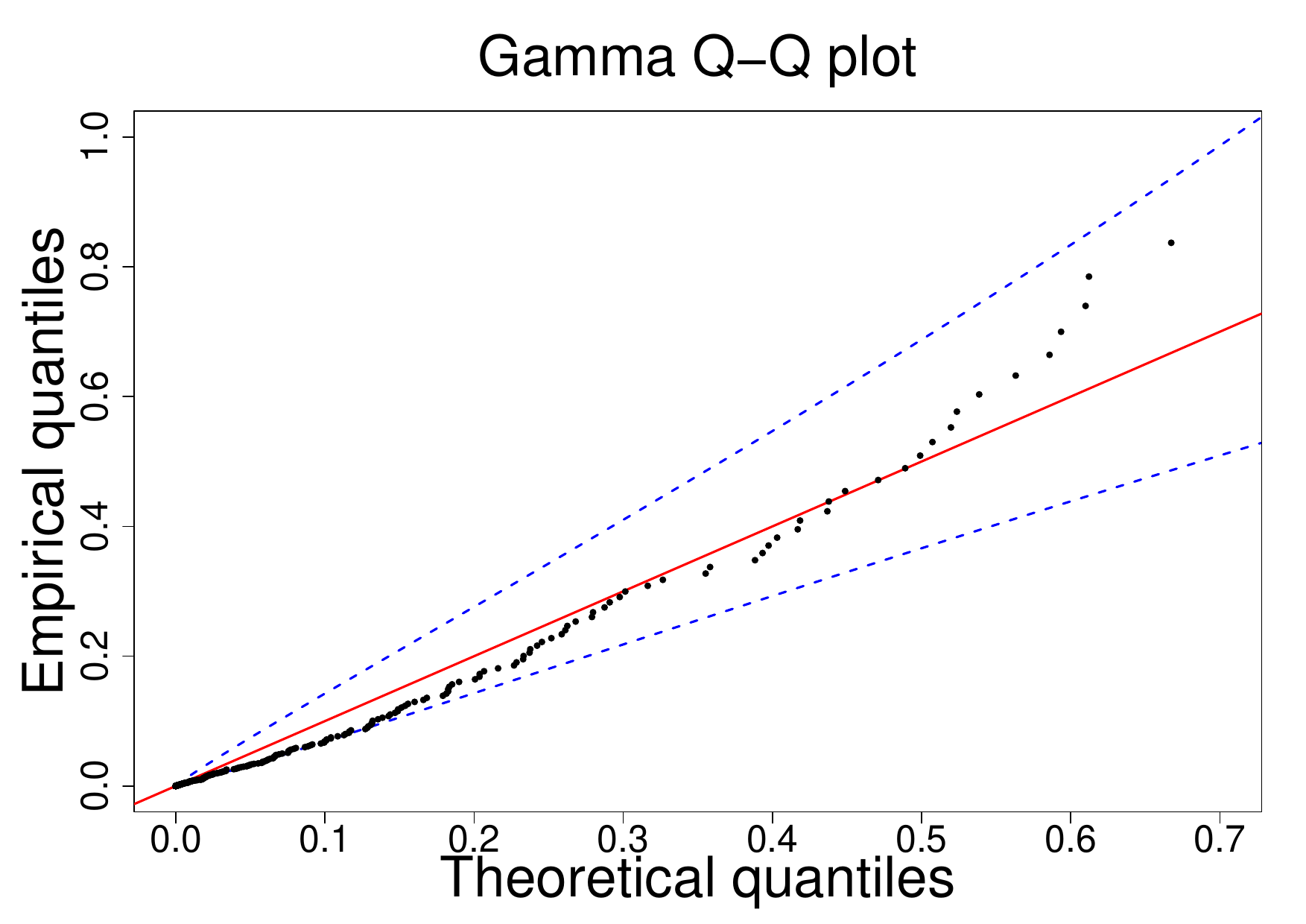}%
		\label{qq_gamma}
	}%
	\vskip\baselineskip
	\subfloat[ACF plot of $\{\hat{{e}}_t\}$]{
		\includegraphics[scale = 0.22]{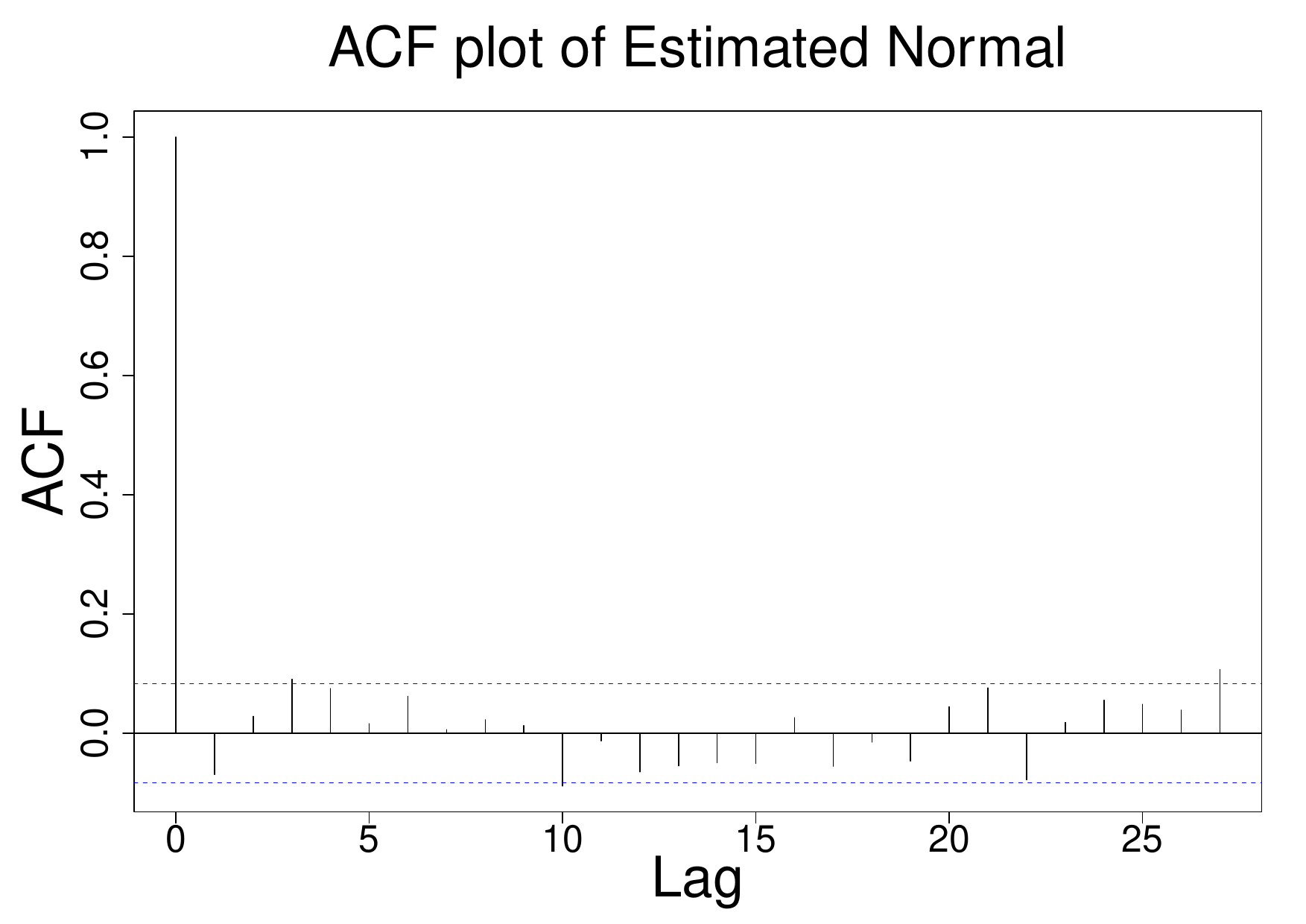}
	\label{acf_normal}
	}
	\subfloat[ACF plot of $\{\hat{\epsilon}_t\}$]{
		\includegraphics[scale = 0.22]{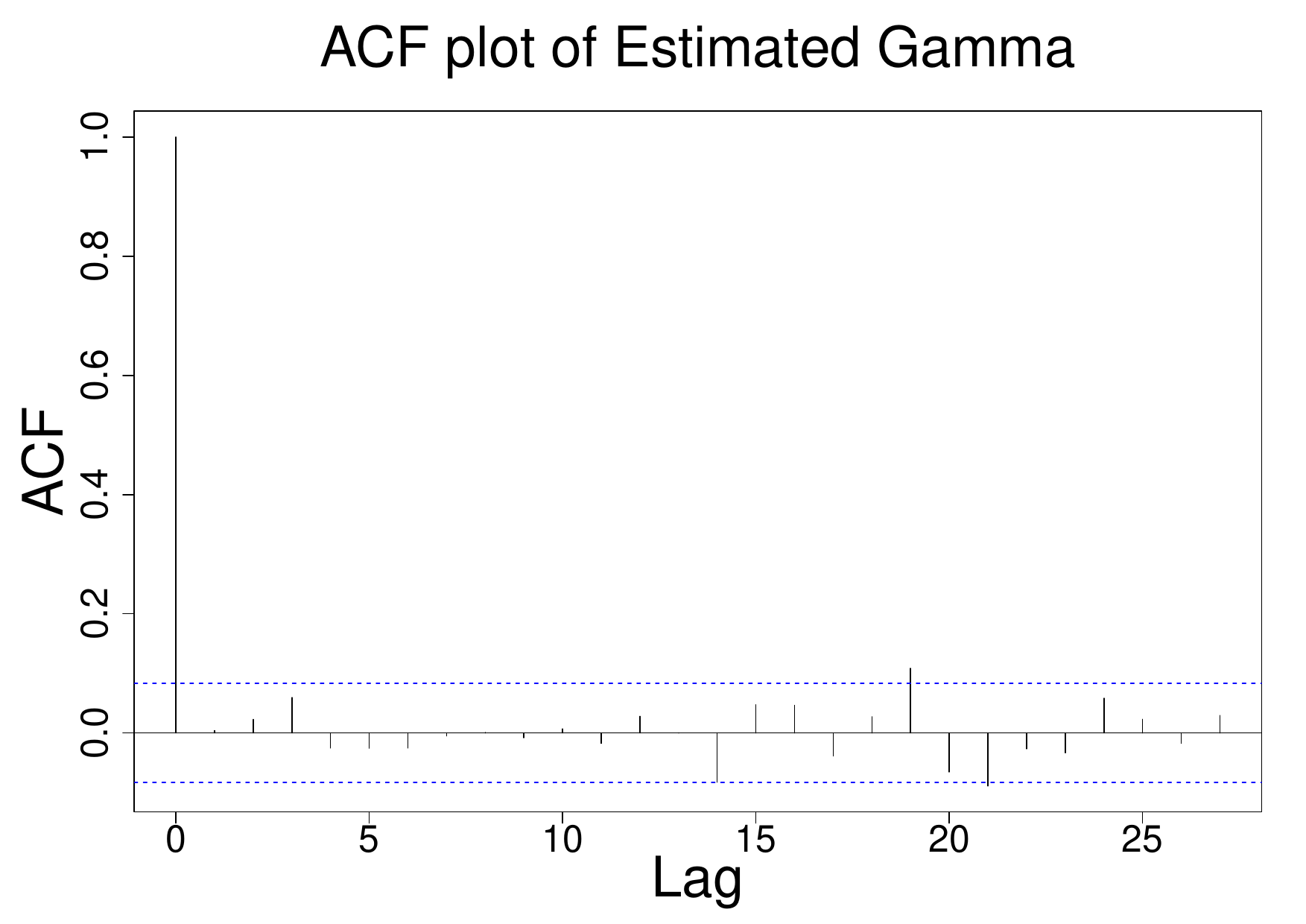}
		\label{acf_gamma}
	}

\end{figure}

\section*{Acknowledgments}
Research supported in part by grants HKRGC-GRF Nos 14302423/14302719/14304221.

\newpage

\appendix

\setcounter{equation}{0} 
\setcounter{lemma}{0} 
\renewcommand{\theequation}{\Alph{section}.\arabic{equation}}
\renewcommand{\thelemma}{\Alph{section}.\arabic{lemma}}

\section{Appendix}

{\bf Proof of Theorem \ref{stationary}.} For any positive continuous random variable $\epsilon_t$ with density $g(\cdot)$, it suffices to prove that for any Borel set $B$, 
\begin{eqnarray} \label{pi.staionary.general}
	\pi(B)=\int_{0}^{1}K(x,B)\pi(dx)\,,
\end{eqnarray}
where $K(x,\cdot)$ is the transition kernel defined by $K(x,B)=\int_B K(x,y)dy $, and $K(x,y)$ is the transition probability from $x$ to $y$.
Without loss of generality, let $B=(a,b)\,, 0<a<b<1. $ Then, from  \eqref{conditionpr2},
the transition kernel can be expressed as
\begin{eqnarray}
	K(x,B)= \int_{a \lor x}^b g(y-x) dy +
	\sum_{T=1}^\infty \int_a^b g(y+T-x) dy\,.
\end{eqnarray}
Thus, the right hand side of \eqref{pi.staionary.general} can be decomposed into three components:

\begin{align} \label{kernel.general}
	\int_{0}^{1}K(x,B)\pi(dx)
	&=\int_0^a \int_a^b g(y-x) dy dx + \int_a^b \int_x^b g(y-x) dy dx  \nonumber \\ 
	&\ \ \ \ + \sum_{T=1}^\infty \int_0^1 \int_a^b g(y+T-x) dydx\,.
\end{align}

Denote $G(\cdot)$ as the cumulative distribution function of $\epsilon_t$,
$\tilde{G} (a, b)$ as $G(b)-G(a)$,
each component can be computed as follows. The first component of \eqref{kernel.general} can be expressed as 
\begin{align} \label{component1.general}
	&\int_0^a \int_a^b g(y-x) dy dx 
	=\int_0^a \int_{a-x}^{b-x} g(y') dy' dx
	=\int_{0}^{b}\int_{0 \lor (a-y')}^{a \land (b-y') }  g(y') dx dy'\nonumber\\
	=&\int_{0}^{a\land (b-a)}\int_{a-y'}^{a }  g(y') dx dy'+
	\int_{b-a}^{a}\int_{a-y'}^{b-y' }  g(y') dx dy'\mathbbm{1}(b-a<a) \nonumber \\
	&+\int_{a}^{b-a}\int_{0}^{a} g(y') dx dy'\mathbbm{1}(b-a>a)+
	\int_{a \lor (b-a)}^{b}\int_{0}^{b-y' } g(y') dx dy' \nonumber\\
	=&\int_{0}^{a\land (b-a)}y'g(y') dx dy'+
	(b-a)\tilde{G}(a, b-a)\mathbbm{1}(b-a<a) 
	\nonumber \\&
	+ a\tilde{G}(b-a, a)\mathbbm{1}(b-a>a)+b \tilde{G}(b, a \lor (b-a))-
	\int_{a \lor (b-a)}^{b}y'g(y') dy'\,.
\end{align}
Next, for the second component of \eqref{kernel.general}, we have  
\begin{align}\label{component2.general}
	&\int_a^b \int_x^b g(y-x)dy dx =
	\int_a^b \int_0^{b-x} g(y')dy' dx  \nonumber\\=&
	\int_0^{b-a} \int_a^{b-y'} g(y') dxdy'=
	(b-a)G(b-a)-\int_0^{b-a} y'g(y')dy'\,.
\end{align}
Finally, the third component of \eqref{kernel.general} can be simplified as 
\begin{align}\label{component3.general}
	&\int_0^1 \int_a^b g(y+T-x) dy
	=\int_0^1 \int_{a-x+T}^{b-x+T} g(y') dy'dx \nonumber \\
	=&\int_{a+T-1}^{b+T} \int_{0 \lor (a+T-y')}^{1 \land (b+T-y')} g(y')dy'dx \nonumber \\
	=&\int_{a+T-1}^{b+T-1} \int_{a+T-y'}^{1} g(y')dxdy'
	+\int_{b+T-1}^{a+T} \int_{a+T-y'}^{b+T-y'} g(y')dxdy' \nonumber \\
	&+\int_{a+T}^{b+T} \int_{0}^{b+T-y'} g(y') dxdy' \nonumber\\
	=&(1-a-T)\tilde{G}(b+T-1, a+T-1)+\int_{a+T-1}^{b+T-1} y'g(y')dy' \nonumber\\
	&+(b-a)\tilde{G}(a+T, b+T-1)+(b+T)\tilde{G}(b+T, a+T) \nonumber \\
	&-\int_{a+T}^{b+T}y'g(y')dy'\,.
\end{align}
Substitute \eqref{component1.general}, \eqref{component2.general} and \eqref{component3.general} into \eqref{kernel.general}, we have
\begin{align}
	&\int_{0}^{1}K(x,B)\pi(dx) \nonumber\\
	=&bG(b)-aG(a)-\int_a^b y'g(y')dy'+\sum_{T=1}^{\infty}\left( \int_{a+T-1}^{b+T-1} y'g(y')dy' - \int_{a+T}^{b+T} y'g(y')dy' \right)
	\nonumber \\&
	+\sum_{T=1}^{\infty}\left((1-T-b)G(b+T-1)+(b+T)G(b+T)\right)\nonumber\\
	&+\sum_{T=1}^\infty((a+T-1)G(a+T-1)-(a+T)G(a+T)) 
	\nonumber \\
	=&bG(b)-aG(a)-\int_a^b y'g(y')dy' +\int_a^b y'g(y')dy'
	-b(G(b)-1)+a(G(a)-1) \nonumber \\
	=&b-a\,,\nonumber
\end{align}
implying that $\pi((b-a))=b-a$. Hence, \eqref{pi.staionary.general} follows.
\hfill $\square$
\\

\noindent{\bf Proof of Theorem \ref{ergodic}.}
Consider the state space $\mathbb{S}=\mathbb{R}\times [0,1)$ of $M_t:=\{X_t,Y_t\}$ with Borel $\sigma$-algebra $\mathcal{S}$, the stationarity and geometric ergodicity are shown by checking the conditions of Theorem 5.1 in \cite{Stelzer2009} as follows:
\begin{enumerate}
	\item $\{M_t\}$ is a weak Feller chain, that is, $E(g(M_2)|M_1 = x)$ is continuous in $x \in \mathbb{S}$ for all bounded and continuous function $g: \mathbb{S} \to \mathbb{S}$.
	\item ${M_t}$ is $\mu$-irreducible, i.e., $\mu(A)>0$ implies $\sum_{n=1}^{\infty}P^n(x, A) >0$ for any Borel set $A\in\mathcal{S}$, where $\mu$ is some non-degenerate measure on ($\mathbb{S}$, $\mathcal{S}$) and 
	$P^n(\cdot, \cdot)$ is the $n$-step transition kernel of $M_t$.
	\item Denote the AR coefficient in TASS model at time $t$ as $\psi_t$.  There exists $\eta \in (0,1]$ and $c<1$ such that 
	$E\left(|\psi_t|^\eta | Y_1 = \delta \right) \leq c$ for any $\delta \in [0,1)$.
\end{enumerate}

The first condition is verified by directly calculating the expectation using \eqref{x2}.
The second condition is clear from the definition of the model.   
Moreover, under the assumption that $|\phi_j|<1$ for $j = 1,\ldots,m$, the third condition holds.
Hence, by Theorem 5.1 in \cite{Stelzer2009}, $\{X_t, Y_t\}$ is stationary and geometrically ergodic.
Finally, from Proposition 2 in \cite{Liebscher2005}, 
stationarity and geometric ergodicity is equivalent to $\beta$-mixing, which completes the proof.
\hfill $\square$

{\bf \noindent Proof of Lemma \ref{prob.property}.}   
Observed that if $Y_t$ is in the $i$th regime, then $X_t \vert Y_t \sim N\left(a_i,\frac{\sigma_i^2}{1-\phi_i^2}\right)\,$, which is the stationary distribution for the AR(1) model. Thus, 
\begin{eqnarray*}
	p(x_t \vert y_t)=\sqrt{\frac{1-\phi_i^2}{2\pi\sigma_i^2}}\mbox{exp}\left(-\frac{(1-\phi_i^2)(x_t-a_i)^2}{2\sigma_i^2}\right)\,.
\end{eqnarray*}
Moreover, if $Y_{t+1}$ is in the $j$th regime,  $X_{t+1}=\phi_j(X_t-a_j)+a_j+e_t$. 
Thus, we have 
\begin{eqnarray*}
	p(x_{t+1} \vert x_t, y_{t+1})=\frac{1}{\sqrt{2\pi\sigma_j^2}}\mbox{exp}\left(- \frac{(x_{t+1}-\phi_j(x_t-a_j)-a_j)^2}{2\sigma_j^2} \right)\,.
\end{eqnarray*}
Finally, we consider the conditional probability density function of $X_{t+2}$ given $X_t,Y_{t+1},Y_{t+2}$. If $Y_{t+1}$ is in the $j$th regime and $Y_{t+2}$ is in the $k$th regime, we have
\begin{align}
	X_{t+2}&=\phi_k(X_{t+1}-a_k)+a_k+\sigma_k e_{t+2} \,,\label{Xt2}\\ 
	X_{t+1}&=\phi_j(X_{t}-a_j)+a_j+\sigma_j e_{t+1} \,. \label{Xt1}
\end{align} 
Substitute \eqref{Xt1} into \eqref{Xt2}, we get 
\begin{eqnarray*}
	X_{t+2}=\phi_k\phi_j(X_t-a_j)+\phi_k(a_j-a_k)+a_k+\phi_k\sigma_j e_{t+1}+\sigma_k e_{t+2}\,.
\end{eqnarray*}
Therefore, $X_{t+2} \vert X_t,Y_{t+1},Y_{t+2} \sim N\big(\phi_k\phi_j(X_t-a_j)+\phi_k(a_j-a_k)+a_k,$ $\phi_k^2 \sigma_j^2+\sigma_k^2\big)\,$, and the conditional density function for $X_{t+2}$ given $X_t,Y_{t+1},Y_{t+2}$ is calculated as
\begin{align*}
	p(x_{t+2} \vert x_t, y_{t+1}, y_{t+2})=&\frac{1}{\sqrt{2\pi(\phi_k^2 \sigma_j^2+\sigma_k^2)}}\cdot\\
	&\mbox{exp}\left(-\frac{(x_{t+2}-\phi_k\phi_j(x_t-a_j)-\phi_k(a_j-a_k)-a_k)^2}{2(\phi_k^2 \sigma_j^2+\sigma_k^2)} \right)\,.
\end{align*}
\hfill $\square$ \\ 

\noindent {\bf \noindent Proof of Proposition \ref{prop:triple.p}.} For different regimes of $Y_t$, $X_t$ follows different distribution accordingly. 
In the two-regime TASS model, each of $Y_t$ belongs to either $[0,r_1)$ or $[r_1,1)$. 
Thus, the joint distribution of $X_t, X_{t+1}$ and $X_{t+2}$ can be partitioned into eight scenarios, given by
\begin{align}\label{joint.012}
	p(x_t, x_{t+1}, x_{t+2};\theta_m)=\sum_{i,j,k\in\{1,2\}} f_t(i,j,k)\,,
\end{align}
{\small where $f_t(i,j,k)=\int_{r_{i-1}}^{r_i} \int_{r_{j-1}}^{r_j} \int_{r_{k-1}}^{r_k} p(x_t \vert y_t) p(x_{t+1} \vert x_t, y_{t+1})p(x_{t+2} \vert x_{t+1}, y_{t+2})p(y_{t+2} \vert y_{t+1})$ $p(y_{t+1} \vert y_t)p(y_t)dy_{t+2}$ $dy_{t+1} dy_{t}$, 
	with $r_0=0$ and $r_2=1$}.

Given $i$, $j$ and $k$, the conditional density functions involving $x_t$'s do not involve $y_t$'s. Therefore, $p(x_t \vert y_t) p(x_{t+1} \vert x_t, y_{t+1})p(x_{t+2} \vert x_{t+1}, y_{t+2})$ can be taken out of the integrals. In other words, $f_t(i,j,k)$ can be factorized as $g_t(i,j,k)w_t(i,j,k)$, where $g_t(i,j,k)$ represents the density for the three consecutive observations with the parameters of $i$th, $j$th and $k$th regime, respectively, while $w_t(i,j,k)$ represents the probability that the three consecutive observations are in the $i$th, $j$th and $k$th regime. The explicit form of the $g_t(i,j,k)$ is stated in Proposition \ref{prop:triple.p}. It remains to derive $w_t(i,j,k)$'s by substituting  \eqref{conditionpr2} into  \eqref{joint.012}. Moreover, from Theorem \ref{stationary}, $p(y_t)=\pi(y_t)=1$. Hence, denote $\mathbf{G}(a)-\mathbf{G}(b)$ as $\tilde{\mathbf{G}}(a,b)$, $w_t(i,j,k)$'s can be evaluated as follows.
\allowdisplaybreaks

\begin{smaller}
	\begin{eqnarray}  
		&& w_t(1,1,1) \nonumber \\
		&=&\int_0^{r_1} \int_0^{r_1}  \int_0^{r_1} p(y_{t+2} \vert y_{t+1})p(y_{t+1} \vert y_t)p(y_t)dy_{t+2} dy_{t+1} dy_{t} \label{w.1} \\
		&=&\int_0^{r_1} \int_{0}^{y_{t+1}} \mathbf{G}(r_1-y_{t+1})
		\frac{\beta^\alpha}{\Gamma(\alpha)}(y_{t+1}-y_{t})^{\alpha-1} e^{-\beta(y_{t+1}-y_{t})} dy_t dy_{t+1} \nonumber  \\
		&&+ \sum_{T=1}^\infty
		\int_0^{r_1} \int_{0}^{y_{t+1}} \big(\tilde{\mathbf{G}}(r_1-y_{t+1}+T, T-y_{t+1}) 
		\frac{\beta^\alpha}{\Gamma(\alpha)}(y_{t+1}-y_{t})^{\alpha-1} \nonumber \\
		&&e^{-\beta(y_{t+1}-y_{t})}\big) dy_tdy_{t+1} \nonumber  \\
		&&+ \sum_{T=1}^\infty
		\int_0^{r_1} \int_{0}^{r_1} \mathbf{G}(r_1-y_{t+1})
		\frac{\beta^\alpha}{\Gamma(\alpha)}(y_{t+1}-y_{t}+T)^{\alpha-1}e^{-\beta(y_{t+1}-y_{t}+T)} dy_tdy_{t+1} \nonumber  \\
		&&+\sum_{T_1=1}^\infty\sum_{T_2=1}^\infty
		\int_0^{r_1} \int_{0}^{r_1} \big(\tilde{\mathbf{G}}(r_1-y_{t+1}+T_1, T_1-y_{t+1}) 
		\frac{\beta^\alpha}{\Gamma(\alpha)}(y_{t+1}-y_{t}+T_2)^{\alpha-1} \nonumber \\
		&&e^{-\beta(y_{t+1}-y_{t}+T_2)}\big) dy_tdy_{t+1} \nonumber  \\
		&=&\int_0^{r_1} \mathbf{G}(r_1-y_{t+1})\mathbf{G}(y_{t+1})dy_{t+1} \nonumber \\
		&&+\sum_{T=1}^\infty \int_0^{r_1} 
		\tilde{\mathbf{G}}(r_1-y_{t+1}+T, T-y_{t+1})
		\mathbf{G}(y_{t+1})dy_{t+1} \nonumber  \\
		&&+\sum_{T=1}^\infty \int_0^{r_1} \mathbf{G}(r_1-y_{t+1})
		\tilde{ \mathbf{G}}(y_{t+1}+T,y_{t+1}-r_1+T)dy_{t+1} \nonumber \\
		&&+\sum_{T_1=1}^\infty   \sum_{T_2=1}^\infty
		\int_0^{r_1} 
		\tilde{\mathbf{G}}(r_1-y_{t+1}+T_1, T_1-y_{t+1})
		\tilde{ \mathbf{G}}(y_{t+1}+T_2, y_{t+1}-r_1+T_2)dy_{t+1}\nonumber\,.
	\end{eqnarray}
\end{smaller}
Similarly,
\begin{small}
	\begin{eqnarray} \label{w.2}
		w_t(1,1,2) &=&\int_0^{r_1} \int_0^{r_1}  \int_{r_1}^1 p(y_{t+2} \vert y_{t+1})p(y_{t+1} \vert y_t)p(y_t)dy_{t+2} dy_{t+1} dy_{t}  \\
		&=&\sum_{T=0}^{\infty}
		\int_0^{r_1}\tilde{ \mathbf{G}}(1-y_{t+1}+T, r_1-y_{t+1}+T)\mathbf{G}(y_{t+1}) dy_{t+1} \nonumber \\
		&&+\sum_{T_1=0}^{\infty} \sum_{T_2=1}^{\infty}
		\int_0^{r_1} \big(\tilde{\mathbf{G}}(1-y_{t+1}+T_1,r_1-y_{t+1}+T_1)\cdot \nonumber \\
		&&\tilde{\mathbf{G}}(y_{t+1}+T_2,y_{t+1}+T_2-r_1)\big) dt_{t+1}\,,\nonumber  \\
		w_t(1,2,1) &=&\int_0^{r_1} \int_{r_1}^1  \int_0^{r_1} p(y_{t+2} \vert y_{t+1})p(y_{t+1} \vert y_t)p(y_t)dy_{t+2} dy_{t+1} dy_{t} \label{w.3}\\
		&=&\sum_{T_1=1}^{\infty} \sum_{T_2=0}^{\infty}
		\int_{r_1}^1
		\big(\tilde{\mathbf{G}}(r_1-y_{t+1}+T_1, T_1-y_{t+1}) \cdot \nonumber \\
		&&\tilde{\mathbf{G}}(y_{t+1}+T_2, y_{t+1}-r_1+T_2)\big) dy_{t+1}\,, \nonumber \\
		w_t(1,2,2) &=&\int_0^{r_1} \int_{r_1}^1  \int_{r_1}^1 p(y_{t+2} \vert y_{t+1})p(y_{t+1} \vert y_t)p(y_t)dy_{t+2} dy_{t+1} dy_{t} \label{w.4}\\
		&=&\sum_{T=0}^\infty
		\int_{r_1}^1 \mathbf{G}(1-y_{t+1})
		\tilde{\mathbf{G}}(y_{t+1}+T, y_{t+1}-r_1+T)dy_{t+1} \nonumber  \\
		&&+\sum_{T_1=1}^{\infty} \sum_{T_2=0}^{\infty}
		\int_{r_1}^1
		\big(\tilde{\mathbf{G}}(1-y_{t+1}+T_2, r_1-y_{t+1}+T_2)\cdot \nonumber \\ 
		&&\tilde{\mathbf{G}}(y_{t+1}+T_1, y_{t+1}-r_1+T_1)\big) dy_{t+1}\,,\nonumber 
	\end{eqnarray}
	\begin{eqnarray}
		w_t(2,2,2)&=&\int_{r_1}^1 \int_{r_1}^1  \int_{r_1}^1 p(y_{t+2} \vert y_{t+1})p(y_{t+1} \vert y_t)p(y_t)dy_{t+2} dy_{t+1} dy_{t} \hspace{-1cm} \label{w.5} \\
		&=&\int_{r_1}^1 \mathbf{G}(1-y_{t+1})
		\mathbf{G}(y_{t+1}-r_1)dy_{t+1} \nonumber \\
		&&+\sum_{T=1}^\infty \int_{r_1}^1 \mathbf{G}(1-y_{t+1})
		\tilde{\mathbf{G}}(y_{t+1}-r_1+T, y_{t+1}-1+T)dy_{t+1} \nonumber \\
		&&+ \sum_{T=1}^\infty \int_{r_1}^1
		\tilde{\mathbf{G}}(1-y_{t+1}+T, r_1-y_{t+1}+T)\mathbf{G}(y_{t+1}-r_1)dy_{t+1} \nonumber \\
		&&+ \sum_{T_1=1}^{\infty} \sum_{T_2=1}^{\infty} \int_{r_1}^1
		\big(\tilde{\mathbf{G}}(1-y_{t+1}+T_1, r_1-y_{t+1}+T_1)\cdot \nonumber \\
		&&\tilde{\mathbf{G}}(y_{t+1}-r_1+T_2, y_{t+1}-1+T_2) \big) dy_{t+1}\,,\nonumber  \\
		w_t(2,2,1)&=&\int_{r_1}^1 \int_{r_1}^1  \int_0^{r_1} p(y_{t+2} \vert y_{t+1})p(y_{t+1} \vert y_t)p(y_t)dy_{t+2} dy_{t+1} dy_{t} \hspace{-1cm} \label{w.6}\\
		&=&\sum_{T=1}^\infty \int_{r_1}^1
		\tilde{\mathbf{G}}(r_1-y_{t+1}+T, T-y_{t+1})\mathbf{G}(y_{t+1}-r_1)dy_{t+1}\nonumber \\
		&&+\sum_{T_1=1}^{\infty} \sum_{T_2=1}^{\infty} \int_{r_1}^1
		\big(\tilde{\mathbf{G}}(r_1-y_{t+1}+T_1, T_1-y_{t+1}) \cdot \nonumber \\&&\tilde{\mathbf{G}}(y_{t+1}-r_1+T_2, y_{t+1}-1+T_2)\big) dy_{t+1}\,, \nonumber 
	\end{eqnarray}
	\begin{eqnarray}\label{w.7}
		w_t(2,1,1) 
		&=&\int_{r_1}^1 \int_0^{r_1}  \int_0^{r_1} p(y_{t+2} \vert y_{t+1})p(y_{t+1} \vert y_t)p(y_t)dy_{t+2} dy_{t+1} dy_{t} \\
		&=&\sum_{T=1}^{\infty} \int_0^{r_1}
		\mathbf{G}(r_1-y_{t+1})
		\tilde{\mathbf{G}}(y_{t+1}-r_1+T,y_{t+1}-1+T)dy_{t+1} \nonumber \\
		& &+\sum_{T_1=1}^{\infty} \sum_{T_2=1}^{\infty} \int_0^{r_1}
		\big(\tilde{\mathbf{G}}(r_1-y_{t+1}+T_2, T_2-y_{t+1}) \cdot \nonumber \\
		&& \tilde{\mathbf{G}}(y_{t+1}-r_1+T_1, y_{t+1}-1+T_1)\big) dy_{t+1}\,, \nonumber  \\
		w_t(2,1,2)
		&=&\int_{r_1}^1 \int_0^{r_1}  \int_{r_1}^1 p(y_{t+2} \vert y_{t+1})p(y_{t+1} \vert y_t)p(y_t)dy_{t+2} dy_{t+1} dy_{t} 
		\hspace{-1cm} \label{w.8}\\
		&=&\sum_{T_1=0}^{\infty} \sum_{T_2=1}^{\infty} \int_0^{r_1}
		\big(\tilde{\mathbf{G}}(1-y_{t+1}+T_1, r_1-y_{t+1}+T_1) \cdot \nonumber \\&& \tilde{\mathbf{G}}(y_{t+1}-r_1+T_2, y_{t+1}-1+T_2)\big) dy_{t+1}\,.\nonumber 
	\end{eqnarray}
\end{small}
Finally, by substituting \eqref{x1}, \eqref{x2} \eqref{w.1} to \eqref{w.8} into \eqref{joint.012}, we arrive at the formula of the joint density $p(x_t, x_{t+1},x_{t+2};\theta_m)$.  \hfill $\square$



\begin{lemma} \label{lemma:derivatives}
	For any positive integer $m$,  $\log p(x_{t},x_{t+1},x_{t+2};\theta_m)$ is a twice continuously differentiable function of $\theta_m$. Furthermore, there exists an integrable function $g_m(x_t,x_{t+1},x_{t+2})$ such that  $$\sup_{{\theta_m} \in \Theta_m}\left|\frac{\partial^k}{\partial {\theta_m^k}}\log p(x_t,x_{t+1},x_{t+2};\theta_m)\right|<g_m(x_t,x_{t+1},x_{t+2})\,,$$ for $k=0,1,2$. 
\end{lemma}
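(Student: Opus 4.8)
The plan is to work from the closed form for the joint density obtained in Proposition~\ref{prop:triple.p} and Remark~\ref{rm.CTL.m},
\[
p(x_t,x_{t+1},x_{t+2};\theta_m)=\sum_{i,j,k=1}^{m} g_t(i,j,k)\,w_t(i,j,k),
\]
where each $g_t(i,j,k)$ is a product of three Gaussian densities whose means $\mu_{1t},\mu_{2t}$ are affine in $(x_t,x_{t+1},x_{t+2})$ and whose variances depend on $\theta_m$ only, while each $w_t(i,j,k)$, given explicitly in \eqref{w.1}--\eqref{w.8}, depends on $(\alpha,\beta,r_1,\ldots,r_{m-1})$ only and is free of the observations. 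First I would record that on the compact set $\Theta_m$ the quantities $\sigma_j^2$, $\sigma_j^2/(1-\phi_j^2)$ and $\phi_k^2\sigma_j^2+\sigma_k^2$ are bounded away from $0$, that $\alpha$ and $\beta$ are bounded away from $0$ and $\infty$, and that the regime widths $r_j-r_{j-1}$ are bounded below; it then follows at once that every Gaussian factor, hence each $g_t(i,j,k)$, is a $C^\infty$ function of $\theta_m$ for every fixed $(x_t,x_{t+1},x_{t+2})$.

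The main work is to show that each $w_t(i,j,k)$ is twice continuously differentiable in $\theta_m$ with derivatives bounded uniformly over $\Theta_m$ by constants not depending on the observations. The series in $T$ in \eqref{w.1}--\eqref{w.8} are built from increments of a Gamma c.d.f.\ of the form $\tilde{\mathbf{G}}(\cdot+T,\cdot+T)$, which decay geometrically in $T$, uniformly over $\Theta_m$, together with all their $\theta_m$-derivatives; hence the series may be differentiated term by term. Because the thresholds $r_1,\ldots,r_{m-1}$ appear simultaneously in the limits of integration and inside the integrands, the differentiation is carried out via Leibniz's rule, but every boundary and interior contribution is smooth and uniformly bounded on $\Theta_m$, so $w_t(i,j,k)\in C^2(\Theta_m)$ with $\sup_{\Theta_m}|\partial^k_{\theta_m}w_t(i,j,k)|\le K_m$ for $k=0,1,2$. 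Combining this with strict positivity of the Gaussian factors and of each $w_t(i,j,k)$ --- the latter because $\epsilon_t$ has full support on $(0,\infty)$, so that every regime-to-regime transition (with wrap-around permitted) has positive probability --- gives $p(x_t,x_{t+1},x_{t+2};\theta_m)>0$ everywhere, and by compactness $\inf_{\Theta_m}w_t(i,j,k)=:c_{ijk}>0$. The chain rule then yields that $\log p(x_t,x_{t+1},x_{t+2};\theta_m)$ is $C^2$ in $\theta_m$.

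For the dominating bound, with $\lambda_{ijk}=g_t(i,j,k)w_t(i,j,k)/p\in[0,1]$ and $\sum_{i,j,k}\lambda_{ijk}=1$, I would use
\[
\partial_{\theta_m}\log p=\sum_{i,j,k}\lambda_{ijk}\bigl(\partial_{\theta_m}\log g_t(i,j,k)+\partial_{\theta_m}\log w_t(i,j,k)\bigr),
\]
together with the second-order identity expressing $\partial^2_{\theta_m}\log p$ as a $\lambda_{ijk}$-weighted combination of $\partial^2_{\theta_m}\log g_t(i,j,k)$, $(\partial_{\theta_m}\log g_t(i,j,k))^2$, cross terms with $\partial_{\theta_m}\log w_t(i,j,k)$, and $\partial^2_{\theta_m}\log w_t(i,j,k)$, minus $(\partial_{\theta_m}\log p)^2$. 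Since the conditional means are affine in the observations and the variances are functions of $\theta_m$ only, $\partial_{\theta_m}\log g_t(i,j,k)$ and $\partial^2_{\theta_m}\log g_t(i,j,k)$ are polynomials in $(x_t,x_{t+1},x_{t+2})$ of degree at most two with coefficients continuous in $\theta_m$, hence uniformly bounded on $\Theta_m$; the squared terms raise the degree to at most four, while the $w$-terms contribute only uniformly bounded constants, by the previous paragraph and the lower bound $c_{ijk}$. Using $0\le\lambda_{ijk}\le1$, $\sum_{i,j,k}\lambda_{ijk}=1$, the upper bound $p\le\sum_{i,j,k}g_t(i,j,k)\le C_0$, and the lower bound $p\ge g_t(i_0,j_0,k_0)c_{i_0j_0k_0}$ for a fixed triple $(i_0,j_0,k_0)$, which forces $|\log p|\le C(1+x_t^2+x_{t+1}^2+x_{t+2}^2)$, one obtains uniformly over $\Theta_m$ and for $k=0,1,2$
\[
\Bigl|\tfrac{\partial^{k}}{\partial\theta_m^{k}}\log p(x_t,x_{t+1},x_{t+2};\theta_m)\Bigr|\le g_m(x_t,x_{t+1},x_{t+2}):=C_m\bigl(1+x_t^{4}+x_{t+1}^{4}+x_{t+2}^{4}\bigr).
\]
Finally, by Theorem~\ref{ergodic} the process is stationary and, by Theorem~\ref{stationary}, the stationary marginal of $X_t$ is the Gaussian mixture $\sum_{i}(r_i-r_{i-1})\,\mathrm{N}\bigl(a_i,\sigma_i^2/(1-\phi_i^2)\bigr)$, which has finite fourth moment; hence $\mathbb{E}\,g_m(X_t,X_{t+1},X_{t+2})<\infty$, i.e.\ $g_m$ is integrable, as required for the downstream uses in Theorems~\ref{thm.consistency} and~\ref{coro.TASS.MAP}.

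The hard part will be the analysis of $w_t(i,j,k)$: justifying term-by-term differentiation of the $T$-series, handling the $r_j$-dependent limits by Leibniz's rule, and confirming that the constants obtained are genuinely independent of the data. By contrast, the Gaussian factors $g_t(i,j,k)$ and the passage from $p$ to $\log p$ are routine once uniform positivity and the lower bounds are secured.
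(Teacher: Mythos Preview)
Your proposal is correct and shares the paper's core strategy---use the factorization $p=\sum_{i,j,k}g_t(i,j,k)w_t(i,j,k)$ and treat the Gaussian factors $g_t$ and the latent-process weights $w_t$ separately---but it is in several respects more complete than the paper's own argument. The paper reduces the problem to ``check the boundedness of $p$ and its first and second derivatives'' and then spends its effort on explicit sample computations of $\partial_{r_1}w_t$, $\partial_\alpha w_t$, $\partial_\beta w_t$ (via Leibniz and term-by-term differentiation of the $T$-series, exactly as you anticipate), declaring the $g_t$ derivatives ``trivial.'' What the paper does \emph{not} do, and you do, is (i) control the $1/p$ factor that appears when passing from $p$ to $\log p$, which requires the uniform lower bound $\inf_{\Theta_m}w_t(i,j,k)>0$ you invoke; (ii) give an explicit dominating function $g_m$ via the mixture-weight identity $\partial_\theta\log p=\sum\lambda_{ijk}(\partial_\theta\log g_t+\partial_\theta\log w_t)$ and the observation that score and Hessian of $\log g_t$ are polynomials of degree at most two (hence degree four after squaring) in $(x_t,x_{t+1},x_{t+2})$; and (iii) verify integrability of $g_m$ via finite fourth moments under stationarity. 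These additions are exactly what the downstream uniform LLN in Theorem~\ref{thm.consistency} needs, so your route buys a cleaner and more self-contained argument; the paper's route buys concrete formulas for the $w_t$-derivatives but leaves the envelope and its integrability implicit.
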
 

{\bf \noindent Proof of Theorem \ref{thm.consistency}.} We divide the proof into two parts. First, we show the consistency and asymptotic normality of the parameter estimator $\hat{\theta}_{m_o}$ (denote as $\hat{\theta}$ below for simplicity) when the true number of regimes $m=m_o$ is known. Second, we show that $m_0$ can be consistently estimated by BIC.

\noindent \textbf{1. Consistency and asymptotic normality of $\hat{\theta}_{m_o}$}:
Denote the normalized composite log-likelihood function as 
$L_n(\theta_m)= \frac{1}{n-2} \sum_{t=1}^{n-2} \log p(x_t, x_{t+1}, x_{t+2};\theta_m)$ where
$\theta_m$ is the parameter of a TASS$(m)$ model. Let $L(\theta_m)=E_{{\theta}^o}(\log p(x_1, x_{2}, x_{3}$ $;\theta_m))$ 
with the expectation evaluated at the true parameter value.  
The estimator $\hat{\theta}_m$ defined in Section  
\ref{sec:model.select} can also be expressed as $\hat{{\theta}}_m=\argmax_{{\theta}_m} L_n({\theta}_m)$. 
Note that by stationarity, $E_{{\theta}^o}(L_n({\theta_m}))=L({\theta_m})$ for any $n\geq 3$ and ${\theta_m}$. 
From Theorem \ref{ergodic} and Lemma \ref{lemma:derivatives}, by the standard uniform LLN, we have the uniform convergence result such that
\begin{eqnarray}\label{eq.u.conv}
	\sup_{{\theta}\in \Theta_{m_o}}|L_n({\theta})-L({\theta})| \to_{a.s.} 0\,,
\end{eqnarray} 
as $n \to \infty$. Now, by definition of the maximum likelihood estimator, $L_n({\theta}^o) \leqslant L_n(\hat{{\theta}})$. Also, by Jensen's Inequality, $L(\hat{{\theta}})\leqslant L({\theta}^o)$. Thus, we have 
\begin{align*}
	0\geq L(\hat{\theta})-L(\theta^o)&=L(\hat\theta)-L_n(\hat\theta)+L_n(\hat\theta)-L_n(\theta^o)+L_n(\theta^o)-L(\theta^o)\\
	&\geq L(\hat\theta)-L_n(\hat\theta)+L_n(\theta^o)-L(\theta^o),
\end{align*}
implying that 
$|L(\hat{{\theta}})-L({\theta}^o) |\leqslant  |L_n({\theta}^o)-L({\theta}^o)|+ |L_n(\hat{{\theta}}) - L(\hat{{\theta}})| $. 
By \eqref{eq.u.conv}, $|L(\hat{{\theta}})-L({\theta}^o)|\overset{a.s.}{\to} 0$ as $n \to \infty$. Thus,  $\hat{{\theta}} \overset{a.s.}{\to} {\theta}^o$ since $L({\theta})$ is smooth and has a unique maximizer at $\theta^o$ and $\Theta_{m_o}$ is compact. The asymptotic normality of $\hat{\theta}$ follows from the standard arguments based on a Taylor expansion of $L_n(\hat{\theta})$ around $\theta^o$ and the central limit theorem.

\noindent\textbf{2. Consistency of $\hat{m}$}: We first show that $m_o$ cannot be underestimated by BIC under the assumption stated in Theorem \ref{thm.consistency}. Given Theorem \ref{ergodic} and Lemma \ref{lemma:derivatives}, by the standard uniform LLN, we have that for any $m\leq m_o$,
\begin{align*}
	\sup_{\theta_m \in \Theta_m} \left|L_n(\theta_m)-L(\theta_m) \right| \to_{a.s.} 0.
\end{align*}
Together with the assumption, this implies that for any given $m\leq m_o$, we have that $\hat{\theta}_m\to \theta_m^*$, in other words, $\hat{\theta}_m$ is a consistent estimator for $\theta_m^*.$ Thus, by the uniform LLN, we have for $m<m_o$,
\begin{align*}
	\frac{1}{n}\text{BIC}(m)-\frac{1}{n}\text{BIC}(m_o) \to_{a.s.} -(L(\theta_m^*)-L(\theta^o) ) >0,
\end{align*}
where the inequality follows from information inequality. Thus we have $P(\hat{m}<m_o)\to 0$ as $n\to \infty.$

We now turn to the proof of $P(\hat{m}>m_o)\to 0$, which is more involved due to the over-parametrization issue, where we need to show that the increase of log-likelihood $L_n(\theta_m)$ brought by over-parametrization is less than $O(\log n/n)$ in probability.

Denote $\Psi_i=(a_i,\phi_i,\sigma_i)$. Given the number of states $m$, the parameter to be estimated can be expressed as $\theta_m=\{\mathbf{r}_m, (\alpha, \beta), \{\Psi_i\}_{i=1}^m\}$, where $\mathbf{r}_m=(r_1,\cdots,r_{m-1})$. Also, the true parameter is $\theta^o=\{\mathbf{r}_m^o,(\alpha^o,\beta^o), \{\Psi_i^o\}_{i=1}^{m_o}\}$, where $(r_1^o,\cdots,r_{m_o-1}^o)$ with  $\Psi_i^o=\{a_i^o,\phi_i^o,\sigma_i^o\}$. 
Note that for $m_o<m\leq M$, due to over-parametrization, there exists $\theta_m^*\in\Theta_m$ such that $$\log p(x_t,x_{t+1},x_{t+2};\theta_m^*)\equiv\log p(x_t,x_{t+1},x_{t+2};\theta^o).$$ 
To see this, 
WLOG, assume that for $\theta_m^*$, we have $(r_1,r_2,\cdots,r_{m_o-1})=(r_1^o,\cdots,$ $r_{m_o-1}^o)$. Thus, for $\theta_m^*$ such that $(\alpha,\beta)=(\alpha^o,\beta^o)$, $\Psi_i=\Psi_i^o, i=1,\cdots,m_o$ and $\Psi_i=\Psi_{m_o}, i=m_o+1,\cdots, m$, it holds that $\log p(x_t,x_{t+1},x_{t+2};\theta_m^*)=\log p(x_t,x_{t+1},x_{t+2};\theta^o)$ and thus $L_n(\theta_m^*)=L_n(\theta^o)$, regardless of the values of $(r_{m_o},\cdots,r_{m-1})$. The same conclusion holds for any $\theta^*_m$ such that $(r_1^o,\cdots,r_{m_o-1}^o) $ $\subset \mathbf{r}_m$ and $(\alpha,\beta)=(\alpha^o,\beta^o)$, $\Psi_j=\Psi_{i(j)}^o, j=1,\cdots,m$, where $i(j)=i$ such that $(r_{j-1},r_j)\subset (r_{i-1}^o-\epsilon/4, r_i^o+\epsilon/4)$.

Using the same argument for consistency as before, we can readily show that for any $m>m_o$, it holds that
\begin{align*}
	&\max_{i=1,\cdots,m_o-1}\min_{j=1,\cdots,m-1} |r_i^o-\hat{r}_j|\to_{a.s.} 0, \\
	&(\hat{\alpha},\hat{\beta})\to_{a.s.} (\alpha^o,\beta^o) \text{ and }
	\hat{\Psi}_j -\Psi_{i(j)}^o \to_{a.s.} 0, ~j=1,\cdots,m,
\end{align*}
where $i(j)=i$ such that $(\hat{r}_{j-1},\hat{r}_j)\subset (r_{i-1}^o-\epsilon/4, r_i^o+\epsilon/4)$.

By the consistency result of $\hat{\theta}_m$, WLOG, we assume that for $\hat{\theta}_m$, we have $(\hat{r}_1,\cdots, \hat{r}_{m_o-1})\to_{a.s.} \mathbf{r}^o$. Denote $\underline{\hat{r}}_m=(\hat{r}_{m_o},\cdots,\hat{r}_{m-1})$ and $\underline{\hat{\theta}}_m=\hat{\theta}_m\setminus \underline{\hat{r}}_m$. On each $\omega$ of the probability space $\Omega$, for any subsequence of $\hat\theta_m$, due to the boundedness of $[0,1]$, we can find a further subsequence such that $\underline{\hat{r}}_m \to \underline{r}_m^*= (r_{m_o}^*,\cdots, r_{m-1}^*)$. On that subsequence of $\hat\theta_m$, by a standard Taylor expansion of $\underline{\hat{\theta}}_m$ around $\underline{\theta}_m^*$ and $\underline{\hat{r}}_m$ around $\underline{{r}}_m^*$, we have that    
\begin{align}
	0=& \begin{bmatrix}
		\frac{\partial}{\partial \underline{\theta}_m}L_n(\hat{\theta}_m)\\
		\frac{\partial}{\partial \underline{r}_m}L_n(\hat{\theta}_m)
	\end{bmatrix}=
	\begin{bmatrix}
		\frac{\partial}{\partial \underline{\theta}_m}L_n(\underline{\hat{\theta}}_m,\underline{\hat{r}}_m)\\
		\frac{\partial}{\partial \underline{r}_m}L_n(\underline{\hat{\theta}}_m,\underline{\hat{r}}_m)
	\end{bmatrix}\nonumber\\
	=&
	\begin{bmatrix}
		\frac{\partial}{\partial \underline{\theta}_m}L_n(\underline{\theta}_m^{*},\underline{r}_m^*)\\
		0\end{bmatrix}
	+ 
	\begin{bmatrix}
		\frac{\partial^2}{\partial \underline{\theta}_m^{2}}L_n(\underline{\tilde\theta}_m,\underline{\tilde r}_m) &
		\frac{\partial^2}{\partial \underline{\theta}_m \partial \underline{r}_m}L_n(\underline{\tilde\theta}_m,\underline{\tilde r}_m)\\
		\frac{\partial^2}{\partial \underline{r}_m \partial \underline{\theta}_m}L_n(\underline{\tilde\theta}_m,\underline{\tilde r}_m)&
		\frac{\partial^2}{\partial \underline{r}_m^{2}}L_n(\underline{\tilde\theta}_m,\underline{\tilde r}_m)
	\end{bmatrix}
	\begin{bmatrix}
		\underline{\hat{\theta}}_{m}-\underline{\theta}_m^{*}\\
		\underline{\hat{r}}_{m}-\underline{r}_m^{*}
	\end{bmatrix},\label{eq:Taylor}
\end{align}
where $(\underline{\tilde\theta}_m,\underline{\tilde r}_m)$ is between $(\underline{\theta}_m^{*},\underline{r}_m^*)$ and $(\underline{\hat{\theta}}_m,\underline{\hat{r}}_m)$. Note that it is easy to see that $\frac{\partial}{\partial \underline{r}_m}L_n(\underline{\theta}_m^{*},\underline{r}_m^*)=0$, $\frac{\partial}{\partial \underline{\theta}_m}L(\underline{\theta}_m^{*},\underline{r}_m^*)=0$, $\frac{\partial}{\partial \underline{r}_m}L(\underline{\theta}_m^{*},\underline{r}_m^*)=0$, $\frac{\partial^2}{\partial \underline{r}_m \partial \underline{\theta}_m}L(\underline{\theta}^*_m,\underline{r}^*_m)=0$ and $\frac{\partial^2}{\partial^2 \underline{r}_m}L(\underline{\theta}^*_m,\underline{r}^*_m)=0$.

{
	By virtue of consecutive tuple likelihood, the first order derivative of the log three-tuple density is a measurable transformation of three consecutive pairs of $\{X_t,Y_t\}$. Thus, 
	$\frac{\partial}{\partial \underline{r}_m}L_n$ is $\beta$-mixing with geometric rate by Theorem \ref{ergodic}, and 
	satisfies the law of iterated logarithm by \citet{Rio:1995}. 
	Together with uniform LLN of the second order derivatives of $L_n$}, the Taylor expansion in \eqref{eq:Taylor} implies that on the further subsequence of $\hat{\theta}_m$, $|\underline{\hat{\theta}}_{m}-\underline{\theta}_m^{*}|=O(\sqrt{\frac{\log\log n}{n}}) \underset{\sim}{<}\frac{\log\log n}{\sqrt n}$.
Thus, by Lemma \ref{lemma:subsub} below, we have $\limsup_n |\underline{\hat{\theta}}_{m}-\underline{\theta}_m^{*}|/( \log\log n/\sqrt{n}) <1$ a.s.

With another Taylor expansion of $\underline{\hat{\theta}}_m$ around $\underline{\theta}_m^*$, we have that
\begin{align*}
	L_n(\hat{\theta}_m)=&L_n(\underline{\theta}_m^{*},\hat{r}_{m_o+1},\cdots,\hat{r}_{m}) \nonumber \\
	&- \frac{1}{2}(\underline{\hat{\theta}}_{m}-\underline{\theta}_m^{*})^\top\frac{\partial^2}{\partial \underline{\theta}_m^{2}}L_n(\underline{\tilde\theta}_m,\hat{r}_{m_o+1},\cdots,\hat{r}_{m}) (\underline{\hat{\theta}}_{m}-\underline{\theta}_m^{*})\\
	=&L_n(\theta^o)+O_p((\log\log n)^2/n) = L_n(\hat{\theta}_{m_o})+O_p((\log\log n)^2/n).
\end{align*}
Thus, we have $P(\min_{m_o+1\leq m\leq M}\text{BIC}(m)>\text{BIC}(m_o))\to 1$ and thus $P(\hat{m}=m_o)\to 1.$ \hfill $\square$ 


\begin{lemma}\label{lemma:subsub}
	Denote $C$ as an arbitrary constant. For any sequence $\{X_n\}$, if every subsequence of $\{X_n\}$ has a further subsequence such that its limsup is $\leq C$, then we have that $\limsup X_n \leq 2C.$
\end{lemma}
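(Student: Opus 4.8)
The plan is to argue by contradiction, and in fact to establish the sharp bound $\limsup_n X_n \le C$; this is all that is needed here, since the lemma is only ever applied with $C \ge 0$, for which $\limsup_n X_n \le C$ already gives $\limsup_n X_n \le 2C$. So suppose instead that $\limsup_n X_n > C$, and set $L := \limsup_n X_n \in (C,+\infty]$. The first step is to extract a subsequence along which $X_{n_k}$ approaches $L$: recursively pick $n_k > n_{k-1}$ with $X_{n_k} > L - 1/k$ when $L$ is finite, or $X_{n_k} > k$ when $L = +\infty$, which is possible because $\sup_{n > n_{k-1}} X_n \ge L$ for every $k$ by the definition of $\limsup$. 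Since moreover $\limsup_k X_{n_k} \le \limsup_n X_n = L$, this forces $X_{n_k} \to L$ (with the convention $X_{n_k}\to+\infty$ when $L=+\infty$).

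The second step is to feed this particular subsequence $\{X_{n_k}\}_k$ into the hypothesis: it yields a further subsequence $\{X_{n_{k_j}}\}_j$ with $\limsup_j X_{n_{k_j}} \le C$. But $\{X_{n_{k_j}}\}_j$ is a subsequence of $\{X_{n_k}\}_k$, which converges to $L$, so $X_{n_{k_j}} \to L$ as $j \to \infty$, and therefore $\limsup_j X_{n_{k_j}} = L > C$. This contradicts $\limsup_j X_{n_{k_j}} \le C$, so $\limsup_n X_n \le C$, and hence $\limsup_n X_n \le 2C$.

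There is essentially no obstacle: the statement is just the one-sided version of the elementary principle that a sequence is pinned down by its subsequences, and the proof only uses (i) the extraction of a subsequence along which $X_{n_k}$ tends to $L = \limsup_n X_n$, which works uniformly whether $L$ is finite or $+\infty$, and (ii) the fact that passing to a further subsequence cannot increase the limit superior — indeed here it preserves it, since the intermediate subsequence already converges. The only point worth a line of care in the write-up is handling the $L=+\infty$ case in parallel with the finite case, and noting explicitly that $\limsup_k X_{n_k}\le L$ together with $X_{n_k}>L-1/k$ forces genuine convergence of the extracted subsequence.
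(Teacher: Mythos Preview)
Your proof is correct and follows essentially the same contradiction strategy as the paper: extract a subsequence witnessing the large $\limsup$ and observe that no further subsequence can have $\limsup \le C$. The only notable difference is that you push the argument to the sharp conclusion $\limsup_n X_n \le C$ (using that a subsequence of a convergent sequence has the same limit), whereas the paper is content with the cruder bound $2C$ by picking terms above $1.5C$; your version has the added virtue of not silently relying on $C>0$ for the extracted subsequence to do its job.
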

\begin{proof}
	Assume that $\limsup X_n > 2C$, then for any $k>1$, we can find a $n_k>k$ such that $X_{n_k}>1.5C$, then there is no subsequence of $X_{n_k}$ with limsup $\leq C$. Thus by contradiction, we have $\limsup X_n \leq 2C.$
\end{proof}

\

{\bf \noindent Proof of Theorem \ref{coro.TASS.MAP}.}
To prove the theorem, it suffices to verify the four conditions in Theorem 1 of \citet{bootstrapfilter2013}:
\begin{enumerate}
	\item[1)] the observed sequence $X_{1:n}=x_{1:n}$ is fixed.
	\item[2)] the likelihood $l_t(y_{1:t})=p(x_t \vert y_{1:t}, x_{1:t-1})$ is a bounded function of $y_{1:t}$ for each $t=1, \ldots,n$.
	\item[3)] the integral of the likelihood $l_t(y_{1:t})$ with respect to the measure $p(y_{1:t}$ $\vert x_{1:t-1})dy_{1:t}$ is positive, i.e., $\int l_t(y_{1:t})p(y_{1:t} \vert x_{1:t-1})dy_{1:t}>0$.
	\item[4)] the maximizer of the conditional distribution $p(y_{1:t}\vert x_{1:t})$ exists and $p(y_{1:n} \vert x_{1:n})$ is continuous at its global maxima.
\end{enumerate}
The first condition is automatically fulfilled. 
By the calculation of the density functions in Section \ref{properties}, 
\begin{eqnarray*}
	l_t(y_{1:t})&=&p(x_t \vert y_{1:t}, x_{1:t-1})=p(x_t \vert y_{t}, x_{t-1})\\
	&=&\frac{1}{\sqrt{2\pi\sigma_j^2}}\mbox{exp}\left(- \frac{(x_{t}-\phi_j(x_{t-1}-a_j)-a_j)^2}{2\sigma_j^2} \right)\,,
\end{eqnarray*}
given $y_t$ is in the $j$th regime, which is a normal density. Thus condition 2 is proved. 

Similarly, from above calculation, $l_t(y_{1:t})>0$. On the other hand, 
\begin{eqnarray*}
	p(y_{1:t} \vert x_{1:t}) \propto  p(y_{1:t}, x_{1:t}) = \prod_{i=2}^t p(x_i \vert x_{i-1}, y_i) p(x_1\vert y_1) \prod_{i=2}^t p(y_i \vert y_{i-1})p(y_1)
\end{eqnarray*}
is positive since each component in the product is positive, where the  explicit form can be found in section \ref{properties}. Therefore $\int l_t(y_{1:t})p(y_{1:t} \vert x_{1:t-1})dy_{1:t} $ is positive. Condition 3 is proved.

Lastly, $p(y_{1:n} \vert x_{1:n})=\prod_{i=2}^n p(x_i \vert x_{i-1}, y_i) p(x_1\vert y_1) \prod_{i=2}^n p(y_i \vert y_{i-1})p(y_1)$ is continuous and differentiable. Thus, all conditions have been verified.
\hfill $\square$ 

\




{\bf \noindent Proof of Lemma \ref{lemma:derivatives}.} The lemma is verified by explicitly analyzing the first and second order partial derivatives of $\log p_{\mathbf{\theta}}(x_t, x_{t+1}, x_{t+2};\theta_m)$, which will be denoted as $\log p(x_t, x_{t+1}, x_{t+2})$ below, with respect to each parameter.

Note that
\begin{eqnarray*}
	\frac{\partial  \log p(x_t, x_{t+1}, x_{t+2})}{\partial \theta_m}
	= \frac{1}{p(x_t, x_{t+1}, x_{t+2})}\frac{\partial p(x_t, x_{t+1}, x_{t+2})}{\partial \theta_m}\,,
\end{eqnarray*}
and
\begin{eqnarray*}
	&&\frac{\partial^2  \log p(x_t, x_{t+1}, x_{t+2})}{\partial \theta_m^2} \nonumber \\
	&=& \frac{1}{p^2(x_t, x_{t+1}, x_{t+2})}\cdot \nonumber \\
	&&\left(\frac{\partial^2 p(x_t, x_{t+1}, x_{t+2})}{\partial \theta_m^2}p(x_t, x_{t+1}, x_{t+2}) - \left(\frac{\partial p(x_t, x_{t+1}, x_{t+2})}{\partial \theta_m} \right)^2  \right) \,.
\end{eqnarray*}
Therefore, it suffices to check the boundedness of $p_{\mathbf{\theta}}(x_t, x_{t+1}, x_{t+2})$ as well as its first and second derivatives.

The boundedness of $p(x_t, x_{t+1}, x_{t+2})=\sum_{i,j,k=1,2} g_t(i,j,k) w_t(i,j,k)$ can be shown by checking the boundedness of $g_t(i,j,k)$ and $w_t(i,j,k)$, respectively. $g_t(i,j,k)$ is the product of normal densities, while $w_t(i,j,k)$ represents a set of probabilities bounded by 1. Thus, boundedness of $p(x_t, x_{t+1}, x_{t+2})$ is checked.

Next, we check the boundedness of the first partial derivative of $p(x_t, x_{t+1}, x_{t+2})$ w.r.t.\ each parameter in the vector $\mathbf{\theta}_m=( \phi_1, a_1, \sigma_1,\ldots,\phi_m, a_m, \sigma_m, r_1,\ldots,r_{m-1},$ $ \alpha, \beta ),$ for $i=1,\ldots,m, j=1,\dots,m-1$. Without loss of generality, we check the boundedness of derivatives for the 2-regime TASS model described in Section \ref{sec:cpl}. For the cases of more regimes and higher AR order, the calculations are similar. 

Observe that $g_t(i,j,k)$ only depends on parameters $\phi_1, \phi_2, \sigma_1, \sigma_2, a_1, a_2$, and $w_t({i,j,k})$ only depends on $\alpha, \beta$ and $r_1$, respectively. 
Moreover, $g_t(i,j,k)$ is uniformly bounded because the density function of normal distribution is bounded. On the other hand, $w_t(i,j,k)$ is bounded since it represents the probability for the specific events, which are within 0 and 1. Therefore, it suffices to consider the partial derivatives of $g_t(i,j,k)$ and $w_t(i,j,k)$ separately.  

For the partial derivatives of $w_t(i,j,k)$ with respect to $r_1$, we explicitly calculate the partial derivatives.
First, for $w_t(1,1,1)$, there are four components. We consider a typical component as follows. 
\begin{align*}
	&\frac{\partial}{\partial r_1}\left[ \sum_{T=1}^\infty \int_0^{r_1} \mathbf{G}_{\alpha, \beta}(r_1-y_{t+1})
	\tilde{\mathbf{G}}_{\alpha, \beta}(y_{t+1}+T,y_{t+1}-r_1+T)dy_{t+1} \right] \\
	=&\sum_{T=1}^{\infty} \Big(\int_0^{r_1} \{\mathbf{g}_{\alpha, \beta}(r_1-y_{t+1}) 
	\tilde{\mathbf{G}}_{\alpha, \beta}(y_{t+1}+T,y_{t+1}-r_1+T) \nonumber \\
	&+\mathbf{G}_{\alpha, \beta}(r_1-y_{t+1})\mathbf{g}_{\alpha, \beta}(y_{t+1}-r_1+T)\}dy_{t+1}) \Big)\\
	= &\sum_{T=1}^{\infty} [\tilde{\mathbf{G}}_{2\alpha, \beta}(r_1+T, T-r_1)+\tilde{\mathbf{G}}_{2\alpha, \beta}(T,T-2r_1)]\,,  
\end{align*}
and
\begin{align*}
	\begin{split}
		&\frac{\partial^2}{\partial r_1^2}\left[ \sum_{T=1}^\infty \int_0^{r_1} \mathbf{G}_{\alpha, \beta}(r_1-y_{t+1})
		\tilde{\mathbf{G}}_{\alpha, \beta}(y_{t+1}+T,y_{t+1}-r_1+T)dy_{t+1} \right] \\
		=&\sum_{T=1}^\infty [\mathbf{g}_{2\alpha, \beta}(r_1+T)+\mathbf{g}_{2\alpha, \beta}(T-r_1)
		+2\mathbf{g}_{2\alpha, \beta}(T-2r_1)],
	\end{split}
\end{align*}
where $\mathbf{g}_{\alpha, \beta}(\cdot)$ 
and $\mathbf{G}_{\alpha, \beta}(\cdot)$ denote the density and cumulative distribution function of Gamma($\alpha, \beta$), respectively, and $\tilde{\mathbf{G}}_{\alpha, \beta}(a, b)$ denotes $\mathbf{G}_{\alpha, \beta}(a)-\mathbf{G}_{\alpha, \beta}(b)$. It can be checked that the above components are bounded. Similarly, all first and second order partial derivatives of $w_t(i,j,k)$'s with respect to $r_1$ can be shown to be bounded.

Next, we derive partial derivatives of $w_t(1,1,1)$ with respect to $\alpha$ and $\beta$. Boundedness of derivatives for other $w_t(i,j,k)$'s can be proved similarly. 
Note that for any $x>0$, denote $\gamma$ as the Euler-Mascheroni constant, we have
\begin{eqnarray} \label{gamma.deri}
	\begin{split}
		\frac{\partial}{\partial \alpha}\mathbf{G}_{\alpha, \beta}(x)
		=&\frac{\partial}{\partial \alpha} \left[\frac{1}{\Gamma(\alpha)} \int_0^{\beta x}t^{\alpha -1}e^{-t}dt \right] \\
		=&\frac{\alpha -1}{\Gamma (\alpha)}\int_0^{\beta x} t^{\alpha-2}e^{-t} dt \nonumber \\
		&- \frac{1}{\Gamma(\alpha)}\left(-\gamma+\sum_{n \geqslant 1}\left(\frac{1}{n}-\frac{1}{n+\alpha -1}\right)\right)\int_0^{\beta x}t^{\alpha -1}e^{-t}dt \\
		=&\mathbf{G}_{\alpha -1, \beta}(x)+\left(\gamma -\sum_{n \geqslant 1} \left(\frac{1}{n}-\frac{1}{n+ \alpha -1} \right) \right)\mathbf{G}_{\alpha, \beta}(x)\,.
	\end{split}
\end{eqnarray}
It is obvious that the above partial derivative is bounded.
Now, using \eqref{gamma.deri}, we analyze the partial derivatives of one typical component of $w_t(1,1,1)$ with respect to $\alpha$.

	

\begin{align*}
	\begin{split}
		&\frac{\partial}{\partial \alpha} \Bigg[
		\sum_{T_1=1}^\infty   \sum_{T_2=1}^\infty
		\int_0^{r_1} \Big[
		\tilde{\mathbf{G}}_{\alpha, \beta}(r_1-y_{t+1}+T_1, T_1-y_{t+1})\cdot \nonumber \\
		& \tilde{\mathbf{G}}_{\alpha, \beta}(y_{t+1}+T_2, y_{t+1}-r_1+T_2) \Big] dy_{t+1} \Bigg]  \\
		=&\int_0^{r_1} \sum_{T_1=1}^\infty \sum_{T_2=1}^\infty \Big\{\tilde{\mathbf{G}}_{\alpha -1, \beta}(r_1-y_{t+1}+T_1, T_1-y_{t+1})
		\tilde{\mathbf{G}}_{\alpha , \beta}(y_{t+1}+T_2, y_{t+1}-r_1+T_2)\\
		&+ \tilde{\mathbf{G}}_{\alpha , \beta}(r_1-y_{t+1}+T_1, T_1-y_{t+1})
		\tilde{\mathbf{G}}_{\alpha -1 , \beta}(y_{t+1}+T_2, y_{t+1}-r_1+T_2)  \\
		&+\left(\gamma -\sum_{n \geqslant 1} \left(\frac{1}{n}-\frac{1}{n+ \alpha -1} \right) \right)
		\tilde{\mathbf{G}}_{\alpha, \beta}(r_1-y_{t+1}+T_1, T_1-y_{t+1}) \cdot \\
		&\tilde{\mathbf{G}}_{\alpha, \beta}(y_{t+1}+T_2, y_{t+1}-r_1+T_2) 
		\Big\}d y_{t+1} \,.
	\end{split}
\end{align*}
It can be seen that the integrand in the above integral is bounded by a constant $C$. Hence the partial derivatives with respect to $\alpha$ is bounded by $r_1C$.  
The second order derivative of $w_t(1,1,1)$ shares the same structure with above integral.
Therefore, the boundedness can be similarly proved.

For the partial derivatives with respect to $\beta$, note that
\begin{eqnarray}
	\frac{\partial}{\partial \beta}\mathbf{G}_{\alpha, \beta}(x)
	=&\frac{\partial}{\partial \beta} \left[\frac{1}{\Gamma(\alpha)} \int_0^{\beta x}t^{\alpha -1}e^{-t}dt \right] 
	\nonumber \\
	=& \frac{1}{\Gamma(\alpha)}x (\beta x)^{\alpha -1}e^{-\beta x} \nonumber \\
	= &\frac{\alpha}{\beta ^2}\mathbf{g}_{\alpha +1, \beta}(x)\,.
\end{eqnarray}
Similar to the derivation of partial derivative with respect to $\alpha$, we illustrate the partial derivatives of a typical component of $w_t(1,1,1)$ with respect to $\beta$ as follows. The boundedness of partial derivatives of other $w_t(i,j,k)$'s can be shown similarly.
\begin{align*}
	\begin{split}
		&\frac{\partial}{\partial \beta}\left[ \sum_{T=1}^\infty \int_0^{r_1} 
		\tilde{\mathbf{G}}_{\alpha, \beta}(r_1-y_{t+1}+T,T-y_{t+1})\mathbf{G}_{\alpha, \beta}(y_{t+1})dy_{t+1} \right] \\
		=&\frac{\alpha}{\beta ^2}\sum_{T=1}^\infty \int_0^{r_1}
		\Big( \mathbf{G}_{\alpha, \beta}(y_{t+1})\tilde{\mathbf{g}}_{\alpha + 1, \beta}(r_1 -y_{t+1}+T,T-y_{t+1})\cdot  \\
		&+ \mathbf{g}_{\alpha +1, \beta}(y_{t+1})
		\tilde{\mathbf{G}}_{\alpha, \beta}(r_1 -y_{t+1}+T, T-y_{t+1})  \Big) dy_{t+1} \\
		=&\frac{2\alpha}{\beta ^2} \sum_{T=1}^\infty
		\left\{
		\tilde{\mathbf{G}}_{2\alpha +1, \beta}(T+r_1, T) 
		-\tilde{\mathbf{G}}_{2\alpha +1, \beta}(T,T-r_1) 
		\right\}\,,\\
		&\frac{\partial^2}{\partial \beta^2}\left[ \sum_{T=1}^\infty \int_0^{r_1} 
		\tilde{\mathbf{G}}_{\alpha, \beta}(r_1-y_{t+1}+T,T-y_{t+1})\mathbf{G}_{\alpha, \beta}(y_{t+1})dy_{t+1} \right] \\
		=& \frac{2\alpha(2\alpha+1)}{\beta^4} \sum_{T=1}^\infty
		(\tilde{\mathbf{g}}_{2\alpha+2, \beta}(T+r_1, T)-
		\tilde{\mathbf{g}}_{2\alpha+2, \beta}(T, T-r_1))\,.
	\end{split}
\end{align*}

On the other hand, the first and second order partial derivatives of $g_t(i,j,k)$'s with respect to $\phi_1, \phi_2, \sigma_1, \sigma_2, a_1$ and $a_2$ can be calculated trivially. The boundedness can readily be shown. This completes the proof. \hfill $\square$

\newpage
\bibliographystyle{apalike}
\bibliography{paper-ref}

\end{document}